\date{}
\definecolor{blue}{rgb}{0.100000,0.100000,0.701961}
\newcommand{\fordagger}{\mathsf{T}} 
\newcommand{\tr}{\mathtt{Tr}}
\newcommand{\X}{\textbf{X}}
\newcommand{\Y}{\textbf{Y}}
\newcommand{\Z}{\textbf{Z}}
\newcommand{\C}{\textbf{C}}
\newcommand{\y}{\textbf{y}}
\newcommand{\x}{\textbf{x}}
\newcommand{\diag}{\mathsf{Diag}}
\newcommand{\Del}{\mathbf{\Delta}}
\newcommand{\Xc}{\textsf{\textbf{X}}} 
\newcommand{\Yc}{\textsf{\textbf{Y}}} 
\newcommand{\Zc}{\textsf{\textbf{Z}}} 
\newcommand{\Sconc}{\textbf{\textsf{S}}} 
\newcommand{\Delc}{\boldsymbol{\mathsf{\Delta}}} 
\newcommand{\Ztilde}{\boldsymbol{\mathsf{\tilde Z}}} 
\newcommand{\Qa}{\mathsf{Q}_\mathsf{a}}
\newcommand{\Qb}{\mathsf{Q}_\mathsf{b}}
\newcommand{\Da}{\mathsf{D}_\mathsf{a}}
\newcommand{\Db}{\mathsf{D}_\mathsf{b}}
\newcommand{\aaa}{\mathsf{a}}
\newcommand{\bbb}{\mathsf{b}}
\newtheorem{theorem}{Theorem}
\newtheorem{lemma}{Lemma}
\newtheorem{definition}{Definition}
\newtheorem{remark}{Remark}
\newenvironment{proof}{{\noindent{\bf Proof:}}}{$\hfill\Box$}
\begin{document}
\title{Sampling and Distortion Tradeoffs for Indirect Source Retrieval }
\author{Elaheh Mohammadi, Alireza Fallah and Farokh Marvasti\\\small Advanced Communications Research Institute (ACRI) \\ \small Department of Electrical Engineering\\\small
Sharif University of Technology \footnote{This paper has been presented in part at GlobalSIP 2016.}}
\maketitle

\begin{abstract} 

Consider a continuous signal that cannot be observed directly. Instead, one has access to multiple corrupted versions of the signal. The available corrupted signals are correlated because they carry information about the common remote signal. The goal is to reconstruct the original signal from the data collected from its corrupted versions. Known as the indirect or remote reconstruction problem, it has been mainly studied in the literature from an information theoretic perspective. A variant of this problem for a class of Gaussian signals, known as the ``Gaussian CEO problem", has received particular attention; for example, it has been shown that the problem of  recovering the remote signal is equivalent with the problem of  recovering the set of corrupted  signals ({separation principle}). 

The information theoretic formulation of the  remote reconstruction problem assumes that the corrupted signals are uniformly sampled and the focus is on optimal compression of the samples. On the other hand, in this paper we revisit this problem from a sampling perspective. More specifically, assuming restrictions on the sampling rate from each corrupted signal, we look at the problem of finding the best sampling locations for each signal to minimize the total reconstruction distortion of the remote signal. In finding the sampling locations, one can take advantage of the correlation among the corrupted signals. The statistical model of the original signal and its corrupted versions adopted in this paper is similar to the one considered for the Gaussian CEO problem; \emph{i.e.,} we restrict to  a class of Gaussian signals. 

Our main contribution is a fundamental lower bound on the reconstruction distortion for any arbitrary nonuniform sampling strategy. This lower bound is valid for any sampling rate. Furthermore, it is tight and matches the optimal reconstruction distortion in low and high sampling rates. Moreover, it is shown that in the low sampling rate region, it is optimal to use a certain nonuniform sampling scheme on all the signals. On the other hand, in the high sampling rate region, it is optimal to uniformly sample all the signals. 
 We also consider the problem of finding the optimal sampling locations to recover the set of corrupted  signals, rather than the  remote signal. Unlike the information theoretic formulation of the problem in which these two problems were equivalent, we show that they are not equivalent in our setting.

\end{abstract}

\section{Introduction}
  In many applications, such as monitoring or sensing systems, one may be interested in reconstructing a  stochastic  source $S(t)$ that is not directly observable. Instead,  access is provided to $k$ correlated  signals $S_1(t), S_2(t), \cdots, S_k(t)$ that are corrupted  versions of $S(t)$. The goal is to reconstruct $S(t)$ from limited information that one can obtain from $S_i(t), i=1,2,\cdots, k$. While the source coding aspect of the problem is classical in information theory (see for instance \cite[Sec. 3.5]{Berger}), its signal processing and sampling aspect has not received much attention. In this paper, we study the sampling aspect of this problem for \emph{stochastic} signal $S(t)$ and the $k$ corrupted  versions $S_i(t)$. We take into account the fact that the  correlation among the signals $S_i(t), i=1,2,\cdots, k$, can help decrease the sampling rate or improve the signal reconstruction accuracy.

\subsection*{A. System model} 
It is known that any \emph{deterministic} continuous function $s(t)$ defined on the interval $[0,T]$ can be expressed in terms of sinusoids as follows:
$$s(t)=\sum_{\ell=0}^{\infty}[a_{\ell}\cos(\ell\omega  t)+b_{\ell}\sin(\ell\omega  t)], \qquad t\in [0,T],$$
where $\omega =2\pi/{T}$. If the number of non-zero coefficients $a_{\ell}$ and $b_{\ell}$ are limited, the signal $s(t)$ is sparse in the frequency domain.
Herein, we  consider  a \emph{stochastic} signal $S(t)$, and show its Fourier coefficients by random variables $A_{\ell}$ and $B_{\ell}$. We assume that the coefficients $A_{\ell}$ and $B_{\ell}$ are zero when $\ell>N_2$ or $\ell<N_1$ for some natural numbers $N_1\leq N_2$, \emph{i.e.,} $S(t)$ is a bandpass stochastic signal:
\begin{align}
S(t)&=\sum_{\ell=N_1}^{N_2}[A_{\ell}\cos(\ell\omega  t)+B_{\ell}\sin(\ell\omega  t)], \qquad t\in [0,T],\label{eqdefs0}\end{align}
where the coefficients $A_{\ell}$ and $B_{\ell}$ for $N_1\leq \ell\leq N_2$ are mutually independent identically distributed (i.i.d.) normal $\mathcal{N}(0, 1)$ variables, \emph{i.e.,} the original signal is white and Gaussian. We cannot observe $S(t)$ directly. Instead,  we have $S_1(t), S_2(t), \cdots, S_k(t)$, also defined on $t\in[0,T]$, that are corrupted  versions of $S(t)$. The corrupted versions of the signal can be expressed as
\begin{align}
S_i(t)&=\sum_{\ell=N_1}^{N_2}[A_{i\ell}\cos(\ell\omega  t)+B_{i\ell}\sin(\ell\omega  t)], \qquad t\in [0,T],~~i\in \{1,2,\cdots, k\},\label{eqdefs}
\end{align}
where  $A_{i\ell}=A_\ell+W_{i\ell}$ and $B_{i\ell}=B_\ell+V_{i\ell}$; here
 $W_{i\ell}$ and $V_{i\ell}$ are independent perturbations that are added to the original signal.  It is assumed that the perturbations $W_{i\ell}$ and $V_{i\ell}$ for $i\in \{1,2,\cdots, k\}$ and $\ell\in \{N_1, N_1+1, \cdots, N_2\}$ are i.i.d.\ variables according to  $\mathcal{N}(0, \eta)$. The perturbations are also mutually independent of the signal coefficients $A_\ell$ and $B_\ell$ for $N_1\leq \ell\leq N_2$.  
The statistical model assumed for the coefficients $A_{\ell}$, $B_{\ell}$, $V_{i\ell}$, $W_{i\ell}$ parallels the one for  the ``Gaussian CEO problem"  \cite{CEO-Berger, Gaussian-CEO}. A summary of the model parameters is given in Table~\ref{table:summary}.

 We are allowed to take $m_i$ samples from the $i$th corrupted  signal $S_i(\cdot)$ at time instances $t_{i1}, t_{i2}, \cdots, t_{im_i}\in[0,T]$ of our choice, for $i=1,2,\cdots, k$. Therefore $m_i/T$ can be viewed as the sampling rate of the $S_i(\cdot)$. We assume that the samples are noisy. The sampling noise can model quantization noise of an A/D converter, or the noise incurred by transmitting the samples to a fusion center over a communication channel.  The sampling noise of each signal $S_i(t)$ is modeled by an independent zero-mean Gaussian random variable with variance $\sigma_i^2$.
We use the samples to reconstruct either the remote signal $S(t)$, or the collection of corrupted  signals $S_1(t), S_2(t), \cdots, S_k(t)$. The motivation for reconstructing $\{S_i(t), i=1,\cdots, k\}$ is twofold: firstly, this would parallel the literature on indirect source coding, where the reconstruction distortion  of  intermediate signals is shown to be equivalent with that of the original signal (the separation theorem \cite{Witsenhausen, Wolf, Dobrushin}). Secondly, these individual signals $S_i(t)$ may contain some other information of interest besides $S(t)$, e.g., the differences $S(t)-S_i(t)$ might be correlated with some other signal of interest.

The reconstruction of the  remote signal $S(t)$ and the corrupted signal $S_i(t)$  are denoted  by  $\hat{S} (t)$ and $\hat{S}_i (t)$, respectively. These reconstructions are calculated using the Minimum Mean Square Error (MMSE) criterion, \emph{i.e.,} $\hat{S} (t)$ is the conditional expectation of the $S(t)$ given all the samples. 
The goal is to optimize over the sampling times $t_{ij}$ to minimize the distance between the signals and their reconstructions. More specifically, we consider the minimization
\begin{align}\mathsf{D}_{\mathsf a\min}=\min_{\{t_{i1},t_{i2}, \cdots, t_{im_i}\}_{i=1}^k}\frac{1}{T}\int_{t=0}^T\mathbb{E}\{|\hat{S}(t)-S(t)|^2\}dt,\label{eqn:dist1am}\end{align}
for the remote signal, or the minimization 
\begin{align}\mathsf{D}_{\mathsf b\min}=\min_{\{t_{i1},t_{i2}, \cdots, t_{im_i}\}_{i=1}^k}\frac{1}{T}\int_{t=0}^T\sum_{i = 1}^{k}\mathbb{E}\{|\hat{S}_i(t)-S_i(t)|^2\}dt.\label{eqn:dist1bm}\end{align}
for reconstruction of  the $k$ corrupted  signals $S_i(t)$, $i=1,2,\cdots, k$. Here  $t_{ij}$ is the   $j$th sampling time of the $i$th signal.

\begin{table*}
\begin{center}
\begin{small}
\begin{tabular}{|c|c|}
\hline
Notation & Description \\
\hline
$T$, $\omega$ & $T$ is signal period and $\omega=2\pi/T$ \\
\hline
$k$& Number of corrupted signals\\
\hline
$S(t)$ and $S_i(t)$& The original signal and the $i$th corrupted signal, respectively.\\
\hline
$A_\ell, B_\ell$& Fourier series coefficients of the original signal $S(t)$\\ &$\ell\in [N_1:N_2]$, $A_\ell, B_\ell\sim\mathcal{N}(0,1)$.\\
\hline
$A_{i\ell}, B_{i\ell}$& Fourier series coefficients of the $i$th corrupted signal $S_i(t)$\\ &$\ell\in [N_1:N_2]$, $A_{i\ell}, B_{i\ell}\sim\mathcal{N}(0,1+\eta)$\\
\hline
$\eta$&Variance of the perturbation added to $A_\ell$ 
\\&to produce $A_{i\ell}$ for $1\leq i\leq k$.
\\
\hline
 & The support of the input signal in frequency
\\$N$, $N_1$ and $N_2$& domain is from $N_1\omega$ to $N_2\omega$. 
\\&$N=N_2-N_1+1$.  \\&$2N$ is the number of free variables of each signal\\
\hline
$m_i$ & Number of noisy samples of the $i$th corrupted signal\\
\hline
$\sigma_i^2$ & Variance of the sampling noise of the $i$th corrupted signal\\
\hline
$\{t_{i1}, t_{i2}, \cdots, t_{im_i}\}$ & Sampling time instances of the $i$th corrupted signal \\
\hline
$\phi_i=\frac{m_i}{2\sigma_i^2}+\frac{1}{\eta}$, $\Phi_p=\sum_{i=1}^k \phi_i^p$ & Definitions of $\phi_i$ and $\Phi_p$ used in Theorems \ref{Thm1} and \ref{Thm2}\\
\hline
\end{tabular}
\end{small}
\end{center}
\caption{Definition of main parameters.}
\label{table:summary}
\end{table*}

\subsection*{B. Related works} The problem that we defined above is novel. However, it relates to the literature on \emph{remote signal reconstruction} and \emph{distributed sampling}. The former has been only studied from an information theoretic and source coding perspective, while the latter has been mainly considered in the context of compressed sensing and wireless sensor networks. Finally, while we consider sampling of multiple signals, there are some previous works that study sampling rate and reconstruction distortion of a single source, \emph{e.g.} see \cite{kipnis1, timestampless, Boda, firstpaper}.

\emph{Remote signal reconstruction:} Reconstruction distortion of correlated signals (lossy reconstruction) is a major theme in multi-user information theory for the class of discrete \emph{i.i.d.} signals. However, the emphasis in multi-user source coding is generally on the quantization and compression rates of the sources, and not on the sampling rates. It is assumed that the signals are all uniformly sampled at the Nyquist rate. The indirect source coding problem  was first introduced by Dobrushin and Tsyabakov  in information theory literature \cite{Dobrushin}. This work and subsequent information theoretic ones deal with discrete sources, by assuming that we have several bandlimited signals $S_i(t)$, that are sampled at the Nyquist rate  with no distortion at the sampling phase. Then, assuming a finite quantization rate for storing the samples, the task is to minimize the total reconstruction distortion (which is only due to quantization). On the other hand, our work in this paper is on indirect source \emph{retrieval} and not indirect source \emph{coding}, as we do not study the quantization aspect of the problem. Rather, we focus on the distortion incurred by the sampling rate (which can be below the Nyquist rate), and the additive noise on the samples. We also allow  nonuniform sampling to decrease the distortion.

\emph{Distributed sampling:}
From another perspective, our problem relates to the {distributed sampling} literature. In distributed compressed sensing, the structure of correlation among multiple signals is their joint-sparsity. This problem was studied in \cite{Baron}, where signal recovery algorithms using linear equations obtained by distributed sensors were given. Authors in \cite{Hormati} model the correlation of two signals by assuming that one is related to the other by an unknown \emph{sparse} filtering operation. The problem of centralized reconstruction of two correlated signals based on their distributed samples is studied, and its similarities with the Slepian-Wolf theorem in information theoretic distributed compression are pointed out. Motivated by an application in array signal processing, the authors in \cite{Ahmed} consider signal recovery for a specific type of correlated signals, assuming that the signals lie
 in an unknown but low dimensional linear subspace.

Spatio-temporal correlation of the distributed signals is a significant feature of wireless sensor networks and can be utilized for sampling and  data collection  \cite{Ganesan}. In \cite{Bandyopadhyay}, the spatio-temporal sampling rate tradeoffs of a sensor network for minimum energy usage is studied. Authors in \cite{Zordan} provide a mathematical model for the spatio-temporal correlation of the signals observed by the sensor nodes. In \cite{Masiero}, the spatio-temporal statistics of the distributed signals are used by the Principal
Component Analysis (PCA) method to find transformations
that sparsify the signal. Compressed sensing is then used for signal recovery. In  \cite{Bajwa},  a compressive
wireless sensing is given for signal retrieval at a fusion center from an ensemble of spatially distributed sensor nodes. See also \cite{Wang} for a distributed algorithm based on sparse random
projections for signal recovery in sensor networks.

\subsection*{C. Our contributions}  Having chosen a particular sampling strategy (such as uniform sampling), one obtains a value for its reconstruction distortion. This value serves as an \emph{upper bound} on the optimal reconstruction distortion. But we are interested to know how close we come to the optimum distortion with this particular sampling strategy. To estimate this, it is desirable to find \emph{fundamental lower bounds} on the reconstruction distortion which hold regardless of the sampling strategy. In this paper, we provide such a lower bound  \emph{for any arbitrary sampling rates} (i.e., any arbitrary values for $m_i$) for both of the problems of reconstructing the remote signal, or the collection of the corrupted  signals. Furthermore, this lower bound is shown to  coincide with the optimal distortion 
 in the high and low sampling regions. In other words, while our fundamental lower bound applies to any arbitrary sampling rates (and provides information about the general behavior of the optimal distortion), it is of particular interest in the high and low sampling regions for which it becomes tight.

\textbf{High sampling rate region:} this refers to the case of $m_i>2N_2$ for all $i$, \emph{i.e.,} we are sampling each of the signals above the Nyquist rate.  Our result in the high sampling region, it is optimal to use uniform sampling for each signal, and the lower bound matches the distortion yielded by uniform sampling.
From a practical perspective, the high sampling rate region is relevant to the case of \emph{high sampling noise} (large $\sigma_i$). When each of the samples taken from a signal is very noisy, we desire to oversample. As an example, if the sampling noise is modeling the quantization noise of an A/D converter, we can consider a $\Sigma\Delta$ modulator that oversamples with high quantization noise. If the sampling noise models the channel noise incurred by transmitting the samples to a fusion center over a wireless medium, then oversampling provides a redundancy to combat the channel noise.

\textbf{Low sampling rate region:} The low sampling rate region refers  to the case of $\sum_{i=1}^k m_i\leq N$. Our result for the low sampling rate region states that a certain nonuniform sampling strategy is distortion optimal, and the lower bound matches the distortion yielded by this nonuniform sampling. In progressive or multi resolution applications, one wishes to recover a low-resolution version of a signal, and based on that decide whether to seek a higher resolution version. The low sampling rate region can be helpful in the low-resolution phase. We also make the following comments about the low-sampling rate region:
\begin{itemize}
\item Our result allows us to quantify the difference between the following two cases: (i) not taking any samples at all, and (ii) taking a total of $N$ samples $\sum_{i=1}^k m_i\leq N$. By showing that the optimal distortion in case (ii) differs from the optimal distortion in case (i) by at most 3dB, we conclude the following negative result: there is no gain beyond 3dB by using any, however complicated, {nonuniform} sampling strategy.\footnote{However, one should also note in sensitive applications, such as radar, extra sampling to improve the resolution by 3dB can be valuable.}
\item If a limitation on the number of samples that we can possibly take is enforced on us as a physical constraint, it is still of interest to know the best possible achievable distortion. 
\end{itemize}

Finally, we comment on a difference between the low and high sampling rate regions. Suppose that we have a total budget on the number of samples $m=\sum_{i=1}^k m_i$ that we can take from the corrupted  signals. Then,
(i) for low sampling rates, $m\leq N$:  if we want to reconstruct either the \emph{remote signal} or  the collection of \emph{corrupted  signals}, it is best to take the samples from the signal with the smallest sampling noise, \emph{i.e.,} if $\sigma_1\leq \sigma_i$ for all $i$, it is optimal to choose $m_1=m, m_i=0$ for $i>2$. (ii) for high sampling rates:  for reconstructing the \emph{remote signal}, taking more samples from the less noisy signals is advantageous, but to reconstruct the  collection of \emph{corrupted  signals},  it is no longer true that we should take as many sample as possible from the less noisy signal. The optimum number of samples that we should take from each signal is an optimization problem, with a solution depending on the parameters of the problem.

\subsection*{D. Proof Techniques}

 The Gaussian assumption implies that the MMSE and linear MMSE (LMMSE) are identical. Since LMMSE estimator only depends on the second moments, the problem reduces to a linear algebra optimization problem. However, this optimization problem is not easy because the variables we are optimizing over, are sampling locations $t_{ij}$ that show up as arguments of sine and cosine functions. Sine and cosine functions are nonlinear, albeit structured, functions. The goal would be to exploit their structure to solve the optimization problem. 

To find fundamental lowers bounds for the reconstruction distortion, we utilize various matrix inequalities: these include (i) an inequality in {majorization theory} that relates trace of a function of a matrix to the diagonal entries of the matrix (see \cite[Chapter 2]{Bhatia}),  (ii) the matrix version of the arithmetic and harmonic means inequality (iii) the L\"{o}wner-Heinz theorem for operator convex functions, and (iv) more importantly a new reverse majorization inequality (Theorem \ref{Lemmaab}). A contribution of this paper is this reverse majorization inequality that might be of independent interest. Majorization inequalities state that the diagonal entries of a Hermitian matrix $F$ are majorized by the eigenvalues of $F$. Therefore, if $F_{\diag}$ is a diagonal matrix wherein we have kept the diagonal entries of $F$ and set the off-diagonals to zero, we will have
\begin{align} \tr \left[ F^{-1}\right ]& \geq \tr \left[ F_{\diag}^{-1}\right ]
\end{align}
Our reverse majorization inequality goes in the reverse direction. For certain matrices $F$ and $G$ of our interest, we show that 
\begin{align} \tr \left[F^{-1}G\right ]& \leq \tr \left[ F_{\diag}^{-1}G_{\diag}\right ].
\end{align}

\subsection*{E. Notation and Organization} 
Uppercase letters are used for random variables and matrices, whereas lowercase letters show (non-random) values. Vectors are denoted by  lowercase bold letters  (such as $\x$), and random vectors are denoted by either uppercase bold  letters  (such as $\X$) or bold sans-serif letters  (such as $\Xc$). The covariance of a random vector $\X$ is denoted by $C_{\X}$. Given a matrix $A$,  $A_{\diag}$  denotes the matrix formed by keeping the diagonal entries of $A$ and changing the off-diagonal entries to zero. Given a vector $\x$, $\diag(\x)$ denotes the diagonal matrix where its diagonal entries are coordinates of $\x$. The symbol $\bigoplus$ is used for the direct sum and $\otimes$ is used for the Kronecker product of matrices. We write $A\leq B$ if $B-A$ is positive semi-definite. For a Hermitian matrix $A$ with eigen decomposition $PDP^{-1}$ and real function $f$, $f(A)$ is defined as $Pf(D)P^{-1}$ in which $f(D)$ is a diagonal matrix, where function $f$ is applied to the diagonal entries of $D$.

The paper is organized as follows: In Section \ref{sec:main-results},  the main results of the paper are presented. In Section \ref{sec:problemformulation}, the problem formulation is derived.  In  Section   \ref{proofs}, the proofs of the main results  are given. Finally,  in Appendices  \ref{sec:prelim} and  \ref{app}, the mathematical tools and technical details which have been used in the main proofs are given . In Appendix \ref{app}, a new reverse majorization theorem is derived which might be of independent  interest in linear algebra.

\section{Main Results} \label{sec:main-results}
Let $N=N_2-N_1+1$. We make the following definitions:
\begin{itemize}
\item We call $(N_2-N_1+1)f_0=Nf_0$ the \emph{signal bandwidth} (of the bandlimited signal), where $f_0=1/T$. 
\item We call $2N_2f_0$ the \emph{Nyquist rate} (twice the maximum frequency of the signal). 
\item We call $m_i/T=m_if_0$ the \emph{sampling rate} of the $i$-th corrupted signal. It is the number of total samples from $S_i(t)$ in $[0,T]$, divided by  period length $T$. If we periodically extend the $m_i$ samples (periodic nonuniform sampling), $m_i/T$ will be the number of samples taken {per unit time}, hence called the sampling rate. 
\end{itemize}

To state the main result, we need a definition. For any real $p$, let $\phi_i=\frac{m_i}{2\sigma_i^2}+\frac{1}{\eta}$, and \begin{align}\Phi_p=\sum_{i=1}^k \phi_i^p.\label{defPhi}\end{align}

\begin{theorem} 
[\textbf{Reconstruction of the original signal}] The following general lower bound on the optimal distortion (given in \eqref{eqn:dist1am}) holds for any given sampling rates:
\begin{align}
\mathsf{D}_{\mathsf a\min} &\geq \max\left(N - N\sum_{i=1}^{k} \frac{m_i}{2(1+\eta)N+2\sigma_i^2}~,~ \frac{N}{(1+\frac{k}{\eta})-\frac{k^2}{\eta^2} (\Phi_1)^{-1} }\right),\label{eqn:lowerboundOriginalSignal}
\end{align}
where $\Phi_1$ is defined as in \eqref{defPhi}. Furthermore, the lower bound given in \eqref{eqn:lowerboundOriginalSignal} is tight (the inequality is an equality)  in the following cases:
\begin{itemize}
\item when $\sum_{i=1}^k m_i\leq N$:  in this case, the optimal sampling points, $t_{ij}$, are all distinct for $1\leq i\leq k, 1\leq j\leq m_i$, and belong to the set $\{0,T/N, \cdots, (N-1)T/N \}$. 
\item when $m_i > 2N_2$ for $1\leq i\leq k$: in this case uniform sampling of each signal $S_i(t)$ is optimal.
\end{itemize}\label{Thm1}
\end{theorem}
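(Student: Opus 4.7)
The approach has three stages: reduce the distortion to a trace of a finite-dimensional posterior-covariance matrix, derive the two lower bounds by combining Woodbury with arithmetic-harmonic-mean inequalities and the reverse majorization theorem, and verify tightness by exhibiting an explicit sampling scheme in each rate regime. For the reduction, joint Gaussianity makes the MMSE estimator linear, so only second moments matter. Stack the Fourier coefficients into $\X=(A_{N_1},B_{N_1},\ldots,A_{N_2},B_{N_2})^\top\sim\mathcal N(0,I)$ and write the coefficient vector of $S_i(t)$ as $\X_i=\X+\mathbf N_i$ with $\mathbf N_i\sim\mathcal N(0,\eta I)$; each noisy sample is $Y_{ij}=c(t_{ij})^\top\X_i+Z_{ij}$ for $c(t)$ the vector of $2N$ sinusoids, and stacking gives $\Y_i=M_i\X+M_i\mathbf N_i+\mathbf Z_i$. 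Parseval converts the integrated distortion into $\tfrac 12\tr(C_{\X\mid\Y_1,\ldots,\Y_k})$, and Gaussian conditioning together with the Woodbury identity reduces the problem to computing the trace of
\[
C_{\X\mid\Y}^{-1}=\bigl(1+\tfrac k\eta\bigr)I-\tfrac 1{\eta^2}\sum_{i=1}^k\sigma_i^2\,H_i^{-1},\qquad H_i=M_i^\top M_i+\tfrac{\sigma_i^2}{\eta}I.
\]

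The bound involving $\Phi_1$ would follow from three applications of the arithmetic-harmonic-mean inequality: first $\tr(H_i^{-1})\geq (2N)^2/\tr(H_i)=2N/(\sigma_i^2\phi_i)$, using that each row of $M_i$ has squared norm $N$ so $\tr(M_i^\top M_i)=m_iN$; then the scalar AM--HM $\sum_i 1/\phi_i\geq k^2/\Phi_1$; and finally $\tr(C_{\X\mid\Y})\geq (2N)^2/\tr(C_{\X\mid\Y}^{-1})$ applied to the posterior precision itself. For the first bound, I would start from the operator inequality $(I+A)^{-1}\succeq I-A$ for $A\succeq 0$ together with the concavity of $x\mapsto x/(\sigma_i^2+\eta x)$ on the eigenvalues of $M_iM_i^\top$; however, this naive route only produces $\eta N$ in the denominator in place of the claimed $(1+\eta)N$. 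Closing this gap is precisely the role of the reverse majorization inequality $\tr(F^{-1}G)\leq\tr(F_{\diag}^{-1}G_{\diag})$ promised in the introduction, applied to a pair $(F,G)$ built from $C_{\X\mid\Y}^{-1}$ so that the identity prior is correctly retained after passing to diagonal blocks.

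For tightness, in the low-rate regime $\sum_i m_i\leq N$, choose all $\sum_i m_i$ sample locations to be \emph{distinct} points of the Nyquist grid $\{jT/N:0\leq j<N\}$. Discrete orthogonality of $\{\cos(\ell\omega t),\sin(\ell\omega t)\}_{\ell=N_1}^{N_2}$ over this $N$-point grid makes the rows of the stacked sampling matrix mutually orthogonal, and $C_{\X\mid\Y}^{-1}$ becomes diagonal with an explicit spectrum that matches the first bound with equality. In the high-rate regime $m_i>2N_2$, uniform sampling makes $M_i^\top M_i=(m_i/2)I$ exactly, so every $H_i$ and the posterior precision collapse to scalar multiples of $I$; the two bounds then coincide with the distortion achieved by uniform sampling. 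The crux, and the main obstacle, is the first lower bound: extracting the $(1+\eta)$ factor rather than the naive $\eta$ is what makes the bound tight for distinct-grid nonuniform samples, and identifying the correct companion matrix $G$ for which the reverse majorization inequality is sharp on this configuration is where most of the linear-algebraic work lies.
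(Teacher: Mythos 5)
Your reduction and your derivation of the second bound are correct, and the latter is genuinely simpler than the paper's. Your identity $C_{\X\mid\Yc}^{-1}=(1+\tfrac k\eta)I-\tfrac1{\eta^2}\sum_i\sigma_i^2H_i^{-1}$ checks out, and since $\tr(H_i)=m_iN+2N\sigma_i^2/\eta=2N\sigma_i^2\phi_i$ is independent of the sampling locations, your three scalar AM--HM steps do yield $\tr(C_{\X\mid\Yc}^{-1})\leq 2N\bigl[(1+\tfrac k\eta)-\tfrac{k^2}{\eta^2}\Phi_1^{-1}\bigr]$ and hence the $\Phi_1$ bound. The paper instead reaches this bound through the matrix arithmetic--harmonic mean inequality, operator monotonicity of $t\mapsto -t^{-1}$, a Peierls-type majorization step to pass to $A_{\diag}$, and a final scalar convexity argument; your route replaces all of that with $\tr(A^{-1})\geq n^2/\tr(A)$ used twice. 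In fact your intermediate bound $\mathsf D_{\mathsf a}\geq N/\bigl[(1+\tfrac k\eta)-\tfrac1{\eta^2}\Phi_{-1}\bigr]$ is tight for uniform sampling even when the $\phi_i$ differ, whereas the stated $\Phi_1$ form (in both your chain and the paper's) loses tightness there unless all $\phi_i$ coincide.

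The genuine gap is the first bound, which you explicitly leave open. You correctly diagnose that a per-signal argument in the coefficient domain loses the prior's contribution and gives $\eta N+\sigma_i^2$ instead of $(1+\eta)N+\sigma_i^2$, and that the reverse majorization inequality is the missing tool — but you propose to build $(F,G)$ from $C_{\X\mid\Yc}^{-1}$, which lives in the wrong ($2N$-dimensional) space. The paper applies the inequality in the $m$-dimensional sample domain, starting from the first form of the error covariance, $2\mathsf D_{\mathsf a}=2N-\tr\bigl[(\Qa\Qa^\fordagger+\eta\Qb\Qb^\fordagger+C_{\Zc})^{-1}\Qa\Qa^\fordagger\bigr]$, and takes $F=\Qa\Qa^\fordagger+\eta\Qb\Qb^\fordagger$, $G=\Qa\Qa^\fordagger$, $C=C_{\Zc}$; here $G=F\circ L$ with block-constant $L$, and the decisive point is that $F_{\diag}=N(1+\eta)I_m$ and $G_{\diag}=NI_m$ hold for \emph{every} choice of sampling times, which is exactly what produces the $(1+\eta)N+\sigma_i^2$ denominators. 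Without constructing this pair (and proving Theorem \ref{Lemmaab} itself, which is a substantial standalone piece of work), the first bound is unproved. A smaller slip: in the low-rate tightness check, distinct grid points make the \emph{rows} of $\Qa$ orthogonal ($\Qa\Qa^\fordagger=NI_m$), but $Q_i^\fordagger Q_i$ and hence $C_{\X\mid\Yc}^{-1}$ need not be diagonal; the equality should be verified in the sample-domain form, not by claiming a diagonal posterior precision.
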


\begin{remark} From optimality of the lower bound for large values of $m_i$, we obtain that 
$$\lim_{\forall i: m_i\rightarrow \infty}\mathsf{D}_{\mathsf a\min} = \frac{N}{1+\frac{k}{\eta}}>0,$$
is strictly positive. The reason is that $m_i=\infty$ implies full access to the set of corrupted signals, $\{S_i(t)\}$, but even in this case we cannot perfectly reconstruct $S(t)$ if the corruption variance $\eta>0$.
\end{remark}

\begin{proof} To prove the theorem, it suffices to show that for any arbitrary choice of sampling time instances, we have
\begin{align}
\mathsf{D}_{\mathsf a} &\geq N - N\sum_{i=1}^{k} \frac{m_i}{2(1+\eta)N+2\sigma_i^2}\label{Lower1S}
\end{align}
and moreover, equality in the above equation holds if $\sum_{i=1}^k m_i\leq N$. This claim is shown in Section \ref{proofsLower1S}; in this section, the optimality of choosing $t_{ij}$ from $\{0,T/N, \cdots, (N-1)T/N \}$ is also established. Next, we also show that 
\begin{align}
\mathsf{D}_{\mathsf a\min} &\geq \frac{N}{(1+\frac{k}{\eta})-\frac{k^2}{\eta^2} (\Phi_1)^{-1} },\label{Lower2S}
\end{align}
and furthermore equality in the above equation holds if $m_i>2N_2$. The proof of this claim is given in Section \ref{proofsLower2S}; in this section, the optimality of uniform sampling is also established. This completes the proof.
\end{proof}

\begin{theorem}[\textbf{Reconstruction of the set of corrupted signals}] The following general lower bound on the optimal distortion (given in \eqref{eqn:dist1bm}) holds:
\begin{align}
\mathsf{D}_{ \mathsf b\min} \geq \max\left(Nk(1+\eta) -N\left((1+\eta)^2+(k-1)\right)\sum_{i=1}^{k} \frac{ m_i}{2(1+\eta)N+2\sigma_i^2}~,~N\Phi_{-1}+\frac{N}{\eta(\eta+k)-\Phi_{-1}}\Phi_{-2}\right),\label{eqn:lowerboundOriginalSignal2}
\end{align}
where $\Phi_{-1}, \Phi_{-2}$ are defined as in \eqref{defPhi}. Furthermore, the lower bound given in \eqref{eqn:lowerboundOriginalSignal2} is achieved  in the  following two cases:
\begin{itemize}
\item when $m=\sum_{i=1}^k m_i\leq N$:  in this case, the optimal sampling points, $t_{ij}$, are all distinct for $1\leq i\leq k, 1\leq j\leq m_i$, and belong to the set $\{0,T/N, \cdots, (N-1)T/N \}$. 
\item when $m_i > 2N_2$ for $1\leq i\leq k$: in this case uniform sampling of each signal $S_i(t)$ is optimal.
\end{itemize}\label{Thm2}
\end{theorem}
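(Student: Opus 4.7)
My plan is to mirror the structure used for Theorem~\ref{Thm1}: prove the two lower bounds in \eqref{eqn:lowerboundOriginalSignal2} separately, and verify tightness at the two extreme sampling regimes by exhibiting explicit sampling schemes.

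\textbf{Setup.} Since $S$, the $W_{i\ell}$, $V_{i\ell}$ and the sampling noise are jointly Gaussian, MMSE coincides with LMMSE and only second-order statistics matter. Expanding $\hat S_i(t)-S_i(t)$ in the sinusoidal basis and integrating, one gets
\[
\mathsf{D}_{\mathsf b}=\tfrac{1}{2}\sum_{\ell=N_1}^{N_2}\bigl(\tr \mathrm{Cov}(\mathbf{A}_\ell\mid\text{samples})+\tr \mathrm{Cov}(\mathbf{B}_\ell\mid\text{samples})\bigr),
\]
where $\mathbf{A}_\ell=(A_{1\ell},\dots,A_{k\ell})^\top$ has prior covariance $C_a=\eta I+\mathbf{1}\mathbf{1}^\top$ (and similarly for $\mathbf B_\ell$). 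The sampling operation produces linear Gaussian observations of the $\mathbf A_\ell,\mathbf B_\ell$ through the sinusoidal matrix and an independent additive noise diagonal with entries $\sigma_i^2$. This is the same basic reduction that presumably underlies Theorem~\ref{Thm1}, so the formulas derived in Section~\ref{sec:problemformulation} should apply after swapping the target covariance (now the full $C_a$ rather than a projection onto the common component).

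\textbf{First bound (tight at low rates).} The single-sample distortion reduction for the estimator of $\mathbf A_\ell,\mathbf B_\ell$ from one sample of $S_i(t_0)$ is at most $\|\mathrm{Cov}(\cdot,\text{sample})\|^2/\mathrm{Var}(\text{sample})$. A direct computation gives $\mathrm{Var}(\text{sample})=N(1+\eta)+\sigma_i^2$, while summing the squared cross-covariances over the $2N$ coordinates of $\mathbf A_\ell,\mathbf B_\ell$ and over $1\leq j\leq k$ yields $N((1+\eta)^2+(k-1))$ thanks to $\cos^2+\sin^2=1$. Thus each sample from signal $i$ reduces the distortion by at most $\tfrac{N((1+\eta)^2+(k-1))}{2N(1+\eta)+2\sigma_i^2}$. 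To upgrade this heuristic into the bound \eqref{eqn:lowerboundOriginalSignal2}, I would write the total distortion as the trace of a residual covariance and invoke the same trace/majorization inequality used for \eqref{Lower1S}: concretely, Cauchy--Schwarz on the Hadamard form of the MMSE update, which shows that the joint distortion decrease is bounded by the sum of the individual (correlated or not) contributions. Tightness at $\sum_i m_i\leq N$ is then obtained by placing all sampling instants on the grid $\{0,T/N,\dots,(N-1)T/N\}$: at these instants the cosine and sine vectors indexed by $\ell\in[N_1,N_2]$ restrict to a scaled orthonormal system, so the problem decouples into $N$ independent $k\times k$ Gaussian estimation subproblems, each contributing exactly the per-sample reduction above.

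\textbf{Second bound (tight at high rates).} Let $F$ denote the total posterior precision matrix for $(\mathbf A_\ell,\mathbf B_\ell)$ induced by the samples together with the prior precision $C_a^{-1}=\tfrac{1}{\eta}I-\tfrac{1}{\eta(\eta+k)}\mathbf{1}\mathbf{1}^\top$. Then $\mathsf{D}_{\mathsf b}=\tfrac12\sum_\ell \tr[F_\ell^{-1}]$. The scheme is to apply the new reverse majorization inequality of Theorem~\ref{Lemmaab} with $G$ equal to an appropriate multiple of $C_a$ (to absorb the off-diagonal prior contribution), combined with operator convexity of $x\mapsto x^{-1}$ (L\"owner--Heinz) and the matrix AM--HM inequality to reduce $F$ to its block-diagonal (one block per signal $i$). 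After these reductions, one is left with the diagonal-precision problem $\diag(\phi_i)-\tfrac{1}{\eta(\eta+k)}\mathbf{1}\mathbf{1}^\top$, whose inverse is given by Sherman--Morrison and has trace exactly $\Phi_{-1}+\tfrac{\Phi_{-2}}{\eta(\eta+k)-\Phi_{-1}}$; multiplying by $N$ yields \eqref{eqn:lowerboundOriginalSignal2}. Tightness when $m_i>2N_2$ is verified by uniform sampling of each $S_i$: at that rate the sinusoidal observation matrix of signal $i$ becomes a scaled tight frame, so the samples are sufficient statistics for $(\mathbf A_\ell,\mathbf B_\ell)$ with per-signal precision $m_i/(2\sigma_i^2)$, i.e.\ exactly the $\diag(\phi_i)$ structure above.

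\textbf{Main obstacle.} The delicate point is the second bound. Unlike in Theorem~\ref{Thm1}, the quantity of interest is $\tr[F^{-1}]$ of the full $k\times k$ precision at each frequency (rather than a scalar entry), and the prior piece $-\tfrac{1}{\eta(\eta+k)}\mathbf{1}\mathbf{1}^\top$ is a genuinely off-diagonal rank-one perturbation that is not removable by a simple majorization bound. This is precisely where the reverse majorization inequality $\tr[F^{-1}G]\leq\tr[F_{\diag}^{-1}G_{\diag}]$ must be applied with a non-identity $G$ chosen to cancel the cross terms; getting the correct choice of $G$ (and checking the hypotheses of Theorem~\ref{Lemmaab}) is the heart of the argument, and the reason the two-term form with $\Phi_{-1}$ and $\Phi_{-2}$ emerges instead of a single-term bound.
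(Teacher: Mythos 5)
Your treatment of the first bound \eqref{Lower1corrupted} is essentially the paper's: the per-sample reduction $N\big((1+\eta)^2+(k-1)\big)/\big(2N(1+\eta)+2\sigma_i^2\big)$ and the grid-orthogonality argument for tightness when $\sum_i m_i\leq N$ are both correct, and the ``upgrade'' you gesture at is exactly an application of Theorem \ref{Lemmaab} with $F=\Qb C_{\Xc}\Qb^\fordagger$ and $G=\Qb C_{\Xc}^2\Qb^\fordagger$ (though note it is that new reverse-majorization theorem, not a Cauchy--Schwarz argument on the MMSE update, that carries the weight there; you should identify the matrices and check the Hadamard-product hypothesis).

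The second bound is where there is a genuine gap. You propose to reach $\mathsf{D}_{\mathsf b}\geq N\,\tr\big[\big(\diag(\phi_i)-\tfrac{1}{\eta(\eta+k)}\mathbf{1}\mathbf{1}^\fordagger\big)^{-1}\big]$ by applying the reverse majorization inequality $\tr[F^{-1}G]\leq\tr[F_{\diag}^{-1}G_{\diag}]$ with a specially chosen $G$. But here the distortion equals $\tfrac12\tr\big[(\Qb^\fordagger C_{\Zc}^{-1}\Qb+\Gamma\otimes I)^{-1}\big]$, the trace of the inverse of a precision matrix, and what is required is a \emph{lower} bound on a trace of an inverse; that is the direction of the ordinary (block) majorization inequality, Lemma \ref{PeirelsLemma} with the convex function $f(x)=x^{-1}$, whereas Theorem \ref{Lemmaab} only yields upper bounds on $\tr[F^{-1}G]$ and is not applicable here (it is your ``Main obstacle'' paragraph that doubles down on the wrong tool). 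Your stated reduction target, ``block-diagonal, one block per signal $i$,'' is also the wrong partition: diagonalizing per signal would discard the rank-one coupling $\mathbf{1}\mathbf{1}^\fordagger$ coming from $\Gamma$, which must survive into the final answer. The correct route (Section \ref{proofsLower2Si}) permutes coordinates so the precision matrix splits into $2N$ blocks of size $k\times k$, one per frequency component, applies block majorization to drop the off-diagonal blocks, and then averages each cosine block with its sine partner via joint convexity of $\tr(\cdot)^{-1}$ (Lemma \ref{jointlyconvex}), using $\cos^2+\sin^2=1$ to convert $\sum_j\cos^2(\cdot)/\sigma_i^2$ and $\sum_j\sin^2(\cdot)/\sigma_i^2$ into $m_i/(2\sigma_i^2)$. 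That averaging step is absent from your sketch and is essential, since for nonuniform sampling the per-frequency precisions are not individually $\diag(m_i/2\sigma_i^2)$. On the positive side, your Sherman--Morrison evaluation of $\tr\big[(\diag(\phi_i)-\tfrac{1}{\eta(\eta+k)}\mathbf{1}\mathbf{1}^\fordagger)^{-1}\big]=\Phi_{-1}+\Phi_{-2}/(\eta(\eta+k)-\Phi_{-1})$ is correct and cleaner than the paper's Cramer's-rule computation, and your tightness check under uniform sampling with $m_i>2N_2$ is sound.
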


\begin{proof} We first show that
\begin{align}
\mathsf{D}_{ \mathsf b\min} \geq Nk(1+\eta) -N\left((1+\eta)^2+(k-1)\right)\sum_{i=1}^{k} \frac{ m_i}{2(1+\eta)N+2\sigma_i^2}. \label{Lower1corrupted}
\end{align}
The proof of this claim is given in Section \ref{proofsLower1Si}. There, we also prove  that when  $\sum_{i=1}^k m_i\leq N$, the equality in the above equation holds and  the optimal time instances are distinct $t_{ij}$ chosen  from $\{0,T/N, \cdots, (N-1)T/N \}$. Next, we show that  
\begin{align}
\mathsf{D}_{\mathsf b\min} &\geq N\Phi_{-1}+\frac{N}{\eta(\eta+k)-\Phi_{-1}}\Phi_{-2},\label{Lower2corrupted}
\end{align}
and furthermore, equality in the above equation holds if $m_i>2N_2$. The proof is given in Section \ref{proofsLower2Si} where we show that  the optimal  points are uniform samples of the corrupted signals. 
\end{proof}

Finally, as a technical result, we also derive a new reverse majorization inequality, given in Theorem \ref{Lemmaab} (Appendix \ref{app}), which might be of interest in linear algebra.

In Fig.~\ref{fig1}, the lower bounds are plotted assuming that $m_i=m$ for all $i$, for 
$k=3, N_1=5, N=15, N_2=19, \sigma_i^2=1, \eta=0.1$. Also, the distortion of the uniform sampling strategy is also plotted, and serves as an upper bound for the optimal distortion curve; thus, optimal distortion curve lies in between the two curves. The two curves match when $m>2N_2=38$. Also, the lower bound is known to be tight in the low sampling rates. This part of the lower bound is drawn in color red. While Fig.~\ref{fig1} is plotted for $\sigma_i^2=1, \eta=0.1$, we observe from numerical simulation that the gap between the lower and upper bound decreases as we increase the sampling or corruption noises.

\begin{figure}[t!]
    \centering
    \begin{subfigure}[t]{0.5\textwidth}
        \centering
\includegraphics[scale=0.4,angle=0]{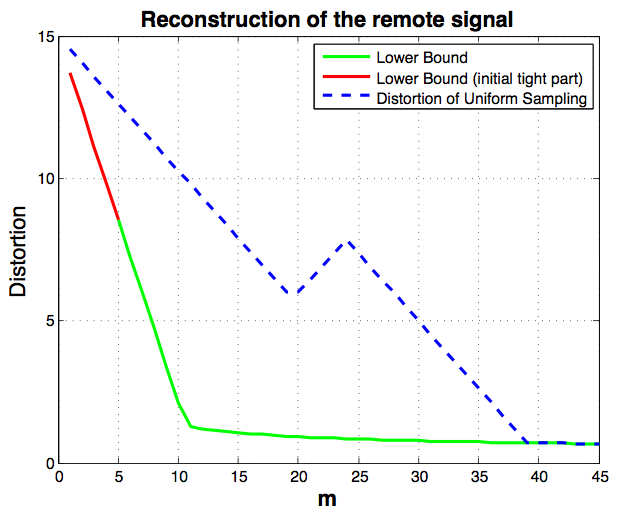}
    \end{subfigure}%
    ~ 
    \begin{subfigure}[t]{0.5\textwidth}
        \centering
\includegraphics[scale=0.4,angle=0]{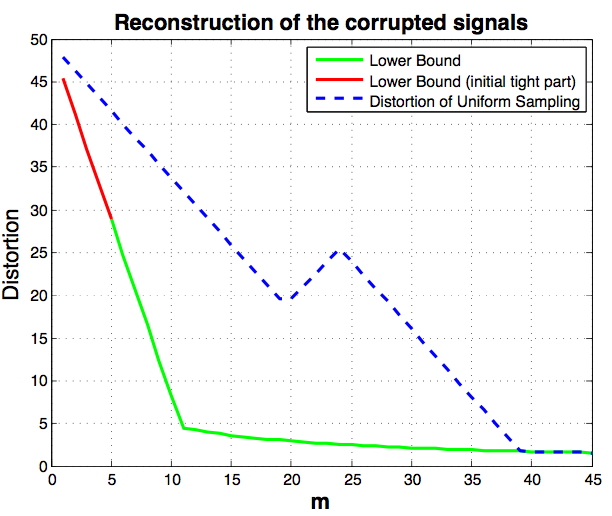}
    \end{subfigure}
    \caption{The the lower bound versus the distortion obtained by the uniform sampling strategy. It is assumed that $m_i=m$ for all $i$. The parameters for this figure are $(k, N_1, N, N_2, \sigma_i^2, \eta) = (3,5,15,19,1,0.1)$.
}
\label{fig1}
\end{figure}

\emph{Discussion 1:} We know the exact value of optimal distortion for $m_i>2N_2$. As argued in Section 1.C, this corresponds to the high sampling rate region and is practically relevant when we have high sampling noise. If the sampling noise models the quantization error of an A/D converter, this result  allows us to answer the question of  whether it is better  to collect some accurate samples from the signals, or collect many more less accurate samples from them (a problem related to selecting an appropriate $\Sigma\Delta$ modulator). 

\emph{Discussion 2:} 
Assume that the sampling noise  variances of the corrupted  signals satisfy $\sigma_1^2\leq \sigma_2^2\leq \cdots \leq \sigma_k^2$. Let us assume that we have a fixed total budget of $m$ samples that we can distribute among the $k$ signals, \emph{i.e.,} $\sum_{i=1}^k m_i=m$. Then if $m\leq N$, the lower bound will be tight. Regardless of whether we want to construct the original signal or the collection of corrupted signals, one can verify that the total distortion (subject to $\sum_{i=1}^k m_i=m$) is minimized when $m_1=m$ and $m_i=0$ for $i>1$, \emph{i.e.,} all of the samples are taken from the first signal, $S_1(t)$, which has the minimum sampling noise variance. 
\\
\noindent
However, the problems of reconstructing the {remote signal} and the set of corrupted signals are not equivalent. To see this assume that we keep the constraint $\sum_{i=1}^k m_i=m$ for some $m>2N_2k$, and further assume that $m_i > 2N_2$. For the reconstruction of the \emph{remote signal}, it can be verified that the total distortion (subject to $\sum_{i=1}^k m_i=m,~~ m_i > 2N_2$) is minimized when $m_i$ for $i \geq 2$ reaches its minimum of $2N_2+1$, which means that we take as many samples as  possible from the signal with minimum sampling noise variance $S_1(t)$.  On the other hand, if we wish to minimize the total distortion of the entire \emph{corrupted  signals}, the optimum value for $m_i$ can vary depending on the value of the parameters;  it is not necessarily true that it is better to sample from the less noisy signal. The intuitive reason for this is as follows: if sampling noise of a signal is very high, we need to take lots of samples from it to be able to have a reconstruction with low distortion. However, if the sampling noise of a signal is very low, we are able to have a good reconstruction with few samples; taking more samples yields  a negligible improvement in distortion.  Therefore, if we aim to reconstruct \emph{all} the signals (i.e., to minimize the sum of the distortions of the signals), the most economic way could be taking less  samples from the less noisy signal.  For instance, let us consider the case of two signals, $k=2$, and take the corruption noise variance to be $ \eta = 0.5$.  Assume that samples from the first signal are taken with variance  $\sigma_1 = 1$, which is less than $\sigma_2 = 10$, the noise variance of samples from the second signal.  Then
for total sample budget $m=100$, the optimum $(m_1, m_2)$ subject to $m_i > 2N_2$ is given in the following for different values of $N_2$. For $N_2=20$, the optimal choice is $(m_1, m_2) = (41, 59)$; observe that it is better to take $59$ samples from the signal with more sampling noise.  For $N_2 = 10$, the optimal choice is $(m_1, m_2) = (32, 68)$. Here $m_i > 2N_2=20$  and the optimal choice is not one of the boundary pairs $(m_1, m_2)=(21, 79)$ or $(79, 21)$.

\section{Problem Formulation}\label{sec:problemformulation}
In this section, we  state the matrix  representation of our problem. 
Let $\X$ be a column vector, consisting of the Fourier coefficients of $S(t)$
\begin{align}
\X = [{A_{N_1}, A_{N_1+1}, \cdots, A_{N_2}, B_{N_1}, \cdots, B _{N_2}}]^\fordagger,\label{Xscoefs}
\end{align}
where  $^\fordagger$ is used for the transpose operation. Similarly, $\X_i$ is a column vector, consisting of the coefficients of $S_i(t)$
\begin{align}
\X_i &= [{A_{iN_1}, A_{i(N_1+1)}, \cdots, A_{iN_2}, B_{iN_1}, \cdots, B _{iN_2}}]^\fordagger
\\&= \X+[{V_{iN_1}, \cdots, V_{iN_2}, W_{iN_1}, \cdots, W _{iN_2}}]^\fordagger
\\&= \X+\Del_i,\label{Xiscoefs}
\end{align}
where $\Del_i=[{V_{iN_1}, \cdots, V_{iN_2}, W_{iN_1}, \cdots, W _{iN_2}}]^\fordagger$.   We assume that $\X$ consists of mutually independent Gaussian random variables $\mathcal{N}(0, 1)$, \emph{i.e.,} the signal $S(t)$ is white. Therefore, $C_\X=I_{2N\times 2N}$. Moreover, the random variables $V_{ij}$ and $W_{ij}$ are assumed to be independent with  the probability distribution $\mathcal{N}(0, \eta)$ for some $\eta > 0$. 

\begin{table*}
\begin{center}
\begin{small}
\begin{tabular}{|c|c|c|}
\hline
Notation & Description & Helpful properties \\
\hline
$\mathbf X$ &  Column vector of size $2N$ of Fourier coefficients of $S(t)$ & $C_{\mathbf X}=I_{2N\times 2N}$ \\
$\mathbf X_i$ &  Column vector of size $2N$ of Fourier coefficients of $S_i(t)$&
$C_{\X_i}=(1+\eta) I$ \\&  for $i = 1, 2, \cdots, k$ & $C_{\X_i \X_j}= I$ for $i\neq j$
\\
$\Xc$ & Vectors $\mathbf X_i$ Stacked on the bottom of each other;  length $2Nk$ &
$C_{\Xc} = \Lambda \otimes I_{2N\times 2N}$
\\
\hline
$\Lambda$ & Symmetric $k\times k$ matrix given in \eqref{Rho}&\\
$\Gamma$ & $\Gamma=\Lambda^{-1}$; Explicit formula given in \eqref{eqn:Rho-1}&\\ 
\hline
$\Del_i$ & The $i$th corruption vector: $\Del_i=\mathbf X_i-\mathbf X$& $C_{\Del_i}=\eta I_{2N\times 2N}$\\
$\Delc$ & Column vectors $\Del_i$ Stacked on the bottom of each other& $C_{\C}=\eta I_{2Nk\times 2Nk}$\\
\hline
$\mathbf{{S}}_i$ & Column vector of size $m_i$; Samples of $S_i(t)$ at $t_{ij}$, $j=1,..., m_i$& $\mathbf{{S}}_i=Q_i\X_i$
\\
$Q_i$ & $m_i\times 2N$ Matrix specified by harmonics at $t_{ij}$, given in \eqref{defQi} &\\
$\Qa$ & Matrices $Q_i$ stacked on the bottom of each other; size: $(\sum m_i)\times 2N$& \\
$\Qb$ & Direct sum of matrices $Q_i$; size: $(\sum m_i)\times 2Nk$&\\
\hline
$\mathbf{Y}_i$ & Noisy samples of the $i$th signal $S_i(t)$; length $m_i$ & $\mathbf{Y}_i=\mathbf{S}_i+\mathbf{Z}_i$\\
$\Yc$& Vectors $\mathbf Y_i$ Stacked on the bottom of each other;  length $m=\sum m_i$&\\
$\mathbf{Z}_i$ & Sampling noise vector of the $i$th signal $S_i(t)$; length $m_i$&$C_{\Z_i} =\sigma_i^2 I_{m_i \times m_i}$\\
$\Zc$& Vectors $\mathbf Z_i$ Stacked on the bottom of each other;  length $m=\sum m_i$&
$C_{\Zc} =\bigoplus_{i=1}^k\sigma_i^2 I_{m_i \times m_i}$\\
$\Ztilde$& $\Ztilde$ is defined as $\Qb \mathsf{\Delc}+\Zc$. It appears in $\Yc=\Qa \mathbf{X}+\Ztilde.$ &
$C_{\Ztilde}=\eta \Qb \Qb^\fordagger +C_{\Zc}$
\\
\hline
\end{tabular}
\end{small}
\end{center}
\caption{Definition of auxiliary parameters.}
\label{table:summary2}
\end{table*}

Vectors $\X_i$ and $\X_j$ are correlated, because they are both corrupted versions of $\X$. Their cross covariance can be computed as $C_{\X_i \X_i}=(1+\eta) I$ and $C_{\X_i \X_j}= I$ for $i\neq j\in\{1,2,\cdots, k\}$. One can verify that for any $j$, the covariance matrix for the $k$ random variables $A_{ij}$ for $i=1,2,\cdots, k$ is equal to
 \begin{align}
\Lambda=\begin{pmatrix}{1+\eta} & {1 } &{1 } & \cdots & {1 }
\\{1} & {1+\eta}&{1} & \cdots&{1}
\\\vdots&\vdots&\vdots&\vdots&\vdots
\\{1} & {1} &{1} & \cdots & {1+\eta}  \end{pmatrix}_{k\times k}\label{Rho}.
\end{align}

Suppose that the $i$th  signal, $S_i(t)$, is sampled at time instances $t_{ij}$ for   $i = 1, 2 , \cdots, k$ and $j= 1,2 \cdots, m_i$. Hence,
$${S}_i(t_{ij})=\sum_{\ell=N_1}^{N_2}[{A}_{i\ell }\cos(\ell\omega  t_{ij})+{B}_{i\ell}\sin(\ell\omega  t_{ij})].$$ To represent the problem in a matrix form, we define $\mathbf{S}_i$ to be the vector of  samples as  
$$\mathbf{{S}}_i=[{S_i}(t_{i1}), {S_i}(t_{i2}), \cdots,  {S_i}(t_{im_i})]^\fordagger.$$
Therefore, we have
 $\mathbf{{S}}_i=Q_i \X_i$, where $\X_i$ is defined in \eqref{Xiscoefs} and   $Q_i$ is an $m_i\times 2(N_2-N_1+1)=m_i\times 2N$ matrix of the form
\small
\begin{align}Q_i=\begin{pmatrix}\cos(N_1\omega  t_{i1}) & \cos((N_1+1)\omega  {t_{i1}}) &\cdots&\cos(N_2\omega t_{i1})
&\sin(N_1\omega  t_{i1}) & \sin((N_1+1)\omega  t_{i1}) &\cdots&\sin(N_2\omega  t_{i1})
\\
\cos(N_1\omega  t_{i2}) & \cos((N_1+1)\omega  t_{i2}) &\cdots&\cos(N_2\omega  t_{i2})
&\sin(N_1\omega  t_{i2}) & \sin((N_1+1)\omega  t_{i2}) &\cdots&\sin(N_2\omega  t_{i2})\\
&&&\vdots&\vdots&&&\\
\cos(N_1\omega  t_{im_i}) & \cos((N_1+1)\omega  t_{im_i}) &\cdots&\cos(N_2\omega  t_{im_i})
&\sin(N_1\omega  t_{im_i}) & \sin((N_1+1)\omega  t_{im_i}) &\cdots&\sin(N_2\omega  t_{im_i})
\end{pmatrix}.\label{defQi}\end{align}\normalsize
Moreover, the observation vector for the $i$th signal is of the following form 
 $$\mathbf{Y}_i=\mathbf{S}_i+\mathbf{Z}_i=Q_i\mathbf{X}_i+\mathbf{Z}_i=Q_i\mathbf{X}+Q_i\Del_i+\mathbf{Z}_i,$$
in which  $\mathbf{Z}_i$ is the $i$th noise vector with covariance of  $C_{\Z_i} =\sigma_i^2 I_{m_i \times m_i}$. 

Now,  we define the vector of coefficients of all the $k$ signals, $ \Xc$,  the vector of all the samples, $\Sconc$,   the vector of all the observations, $\Yc$,... , as follows:  
\begin{align} \Xc&=[\mathbf{X}_1^\fordagger, \mathbf{X}_2^\fordagger, \cdots, \mathbf{X}_k^\fordagger]^\fordagger,
\label{eqn:defnsA1}\\ \Sconc&=[\mathbf{S}_1^\fordagger, \mathbf{S}_2^\fordagger, \cdots, \mathbf{S}_k^\fordagger]^\fordagger,
\label{eqn:defnsA2}\\ \Yc&=[\mathbf{Y}_1^\fordagger, \mathbf{Y}_2^\fordagger, \cdots, \mathbf{Y}_k^\fordagger]^\fordagger,
\label{eqn:defnsA3}\\ \Zc&=[\mathbf{Z}_1^\fordagger, \mathbf{Z}_2^\fordagger, \cdots, \mathbf{Z}_k^\fordagger]^\fordagger,
\label{eqn:defnsA4}\\ \Delc &=[\mathbf{\Delta}_1^\fordagger, \mathbf{\Delta}_2^\fordagger, \cdots, \mathbf{\Delta}_k^\fordagger]^\fordagger,
\label{eqn:defnsA5}\\ \Qa&=[{Q}_1^\fordagger, {Q}_2^\fordagger, \cdots, {Q}_k^\fordagger]^\fordagger.
\label{eqn:defnsA6}
\end{align}
Furthermore, let
 \begin{align}
\Qb= \bigoplus_{i=1}^k Q_i\label{defQbA}
\end{align}
 be the direct sum of the individual matrices $Q_i$.
Then, we can write
\begin{align}\Yc = \Sconc+\Zc=\Qb \Xc+\Zc=\Qa \mathbf{X}+\Qb \Delc+\Zc.\end{align}
 One can verify that the covariance matrix of the noise is  
\begin{align}
C_{\Zc} =\bigoplus_{i=1}^k C_{\Z_i}=\bigoplus_{i=1}^k\sigma_i^2 I_{m_i \times m_i}.
\label{eqn:defCz18}
\end{align} Moreover, $C_{\mathbf{X}}=I_{2N\times 2N}$ and $C_{\Xc} = \Lambda \otimes I_{2N\times 2N}$, where $\Lambda$ was given in  \eqref{Rho}, and $N=N_2-N_1+1$.

\subsection{Reconstruction of  the remote signal $S(t)$ and its corrupted version $S_i(t)$}
 Here, we first state a lemma, which is  frequently used in the formulation and proofs of our problem. The Lemma provides   the  two alternative forms of the  LMMSE  estimator and the mean squared error. Next we  use this lemma to  formulate the  reconstruction of the remote signal $S(t)$ and its corrupted versions $S_i(t)$  in the subsequent subsections.

\begin{lemma}{\cite{Kay}\label{lmmse}
Suppose that $\Y = A\X+\Z$ in which $\Y$ is an observation vector, $A$ is a known matrix, $\X$ is a vector to be estimated and $\Z$ is an additive noise vector. 
In the case $\X$ and $\Z$ are mutually independent Gaussian vectors, LMMSE is optimal and the estimator and  the mean square error, respectively,  are given by
\begin{align}
\hat{\mathbf{x}}_{\mathrm{MMSE}}(\mathbf{y}) & = \mathbb{E} \left \{\mathbf{X} | \mathbf{y} \right \}=W\mathbf{y}, \nonumber \\
\mathbb{E}\| \mathbf{X}-\hat{\mathbf{X}} \|^2 & =\mathbb{E}_{\mathbf{Y}}\left \{ \mathsf{Var}[\mathbf{X} | \mathbf{Y}]\right\}=\tr(C_e), \label{eqn:mmseCe}
\end{align}
where the reconstruction matrix, $W$, and the error covariance matrice, $C_e$, are of the following forms:
\begin{align}
 W & = C_{\X\Y} C_{\Y}^{-1} = C_{\X}A^\fordagger(AC_{\X}A^\fordagger+C_{\Z})^{-1}, \nonumber
\\C_e &= C_{\X} -C_{\X\Y} C_{\Y}^{-1}C_{\Y\X}= C_{\X} -C_{\X}A^\fordagger (AC_{\X}A^\fordagger+C_{\Z})^{-1} AC_{\X}.\label{ce1}
\end{align}
Or alternatively \cite{Luenberger}, using the matrix identity
\begin{align}
C_{\X} A^\fordagger ( A C_{\X}A^\fordagger +C_{\Z})^{-1}  =  ( A^\fordagger C_{\Z}^{-1} A + C_{\X}^{-1})^{-1}A^\fordagger C_{\Z}^{-1},\label{alternativeformmatrix}
\end{align}
the matrices $W$ and $C_e$ are  given by
\begin{align}
 W & =  ( A^\fordagger C_{\Z}^{-1} A + C_{\X}^{-1})^{-1}A^\fordagger C_{\Z}^{-1},\nonumber
\\C_e &=  ( A^\fordagger C_{\Z}^{-1} A + C_{\X}^{-1})^{-1}.\label{ce2}
\end{align}}
\end{lemma}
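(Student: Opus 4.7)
The plan is to establish the two equivalent forms of the LMMSE estimator in the setting where $\X$ and $\Z$ are independent Gaussian vectors. First I would observe that since $\Y = A\X + \Z$ is an affine function of the jointly Gaussian pair $(\X, \Z)$, the vector $(\X, \Y)$ is itself jointly Gaussian. It is a classical fact that the conditional expectation $\mathbb{E}[\X\mid \Y]$ of one jointly Gaussian vector given another is an affine function of the conditioning vector, so the MMSE estimator coincides with the LMMSE estimator. This reduces the problem to finding the best linear estimator $\hat{\X} = W\Y$ (the mean terms vanish since the vectors are zero-mean).

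Next, I would apply the orthogonality principle: the optimal $W$ satisfies $\mathbb{E}[(\X - W\Y)\Y^\fordagger] = 0$, i.e.\ $W C_{\Y} = C_{\X\Y}$, hence $W = C_{\X\Y} C_{\Y}^{-1}$. Plugging in $C_{\X\Y} = \mathbb{E}[\X(A\X+\Z)^\fordagger] = C_\X A^\fordagger$ (using the independence of $\X$ and $\Z$) and $C_\Y = A C_\X A^\fordagger + C_\Z$ yields the first form in \eqref{ce1}. The error covariance $C_e = C_\X - C_{\X\Y} C_\Y^{-1} C_{\Y\X}$ then comes from expanding $\mathbb{E}[(\X - W\Y)(\X - W\Y)^\fordagger]$ and substituting $W = C_{\X\Y}C_\Y^{-1}$, and the trace formula in \eqref{eqn:mmseCe} is by definition of the squared error.

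For the second form, the task is to verify the matrix identity \eqref{alternativeformmatrix}. I would do this by cross-multiplication: multiplying the left-hand side on the left by $(A^\fordagger C_\Z^{-1} A + C_\X^{-1})$ and on the right by $(A C_\X A^\fordagger + C_\Z)$, both sides collapse to $A^\fordagger C_\Z^{-1} A C_\X A^\fordagger + A^\fordagger$. This immediately converts the expression for $W$ in \eqref{ce1} into the one in \eqref{ce2}. For the error covariance, I would apply the Woodbury matrix inversion lemma to $C_\X - C_\X A^\fordagger (A C_\X A^\fordagger + C_\Z)^{-1} A C_\X$, which rewrites it as $(C_\X^{-1} + A^\fordagger C_\Z^{-1} A)^{-1}$.

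There is no serious obstacle here — the lemma is textbook material from \cite{Kay, Luenberger}; the only point warranting care is the invertibility of $C_\X$ and $C_\Z$ used in the second form, which holds in our setting since $C_\X = I_{2N\times 2N}$ and $C_\Z = \bigoplus_i \sigma_i^2 I$ are both positive definite.
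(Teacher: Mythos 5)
Your proof is correct and is essentially the standard derivation that the paper relies on by citation to \cite{Kay} and \cite{Luenberger}: joint Gaussianity reduces MMSE to LMMSE, the orthogonality principle yields $W=C_{\X\Y}C_{\Y}^{-1}$, and the identity \eqref{alternativeformmatrix} plus Woodbury gives the alternative form. The paper itself offers no proof of this lemma, so there is nothing to compare beyond noting that your argument (including the care about invertibility of $C_{\X}$ and $C_{\Z}$) is exactly what the cited references supply.
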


\subsubsection{Reconstruction of $S(t)$} \label{secReconSt}
Here, the goal is to reconstruct $S(t)$ with minimum distortion using the observation vector $\Yc$. We use the MMSE criterion to minimize the average distortion subject to the samples. From the Parseval's theorem, we have
\begin{align}\Da=\frac{1}{T}\int_{t=0}^T\mathbb{E}\{|\hat{S}(t)-S(t)|^2\}dt= \frac 12 \mathbb{E}\|\mathbf{X}-\hat{\mathbf{X}}\|^2,\label{parseval1}\end{align}
where $\hat{S}(t)$ is the reconstructed signal and $\hat{\mathbf{X}}$ is the MMSE reconstruction of the coefficient vector $\mathbf{X}$ from the observation vector $\Yc$. Since  the random variables are jointly Gaussian, the MMSE estimator is optimal. 
From the equation 
$$\Yc=\Qa \mathbf{X}+\Ztilde,$$ where 
$\Ztilde=\Qb \Delc+\Zc$,
the error of the linear MMSE estiamtor is equal to 
\begin{align}
\mathbb{E}\| \mathbf{X}-\hat{\mathbf{X}}\|^2 & =\mathbb{E}_{\Yc}\left \{ \mathsf{Var}[\mathbf{X} | \Yc]\right\}=\tr(C^{\mathsf a}_e), \label{eqn:mmseCe1}
\end{align}
where $C^{\mathsf a}_e$ has the following two alternative forms
\begin{align}
C^{\mathsf a}_e &=  C_{\mathbf{X}}- C_{\mathbf{X}}   \Qa^\fordagger(\Qa  C_{\mathbf{X}} \Qa^\fordagger+C_{\Ztilde})^{-1} \Qa  C_{\mathbf{X}},   \label{Ce1St}
\\&= ( \Qa^\fordagger C_{\Ztilde}^{-1}  \Qa+C_{\mathbf{X}}^{-1}  )^{-1}. \label{Ce2St}
\end{align}
In the above formula, the covariance matrix of $\Ztilde$ is 
\begin{align}C_{\Ztilde}= \Qb C_{\Delc} \Qb^\fordagger +C_{\Zc}=\eta  \Qb  \Qb^\fordagger +C_{\Zc}.\label{defnCtildeZ}\end{align}

\subsubsection{Reconstruction of $S_i(t)$ for $i=1,2,\cdots, k$}
Here, the goal is to reconstruct all the $k$ signals with minimum distortion using the observation vector $\Yc$. Again, from the Parseval's theorem, we have 
\begin{align}\mathsf{D}_{\mathsf b}=\frac{1}{T}\sum_{i=1}^k\int_{t=0}^T\mathbb{E}\{|\hat{S}_i(t)-S_i(t)|^2\}{d}t= \frac 12 \mathbb{E}\|\Xc-\hat{\Xc}\|^2,\label{parseval2}
\end{align}
in which $\hat{S}(t)$ and  $\hat{\Xc}$ are the  reconstructed signal and  the estimated coefficients, respectively.
From the equation 
$$\Yc=\Qb \Xc+\Zc,$$ 
the LMMSE error is equal to 
\begin{align}
\mathbb{E}\| \Xc-\hat{\Xc}^2 \| & =\mathbb{E}_{\Yc}\left \{ \mathsf{Var}[\Xc | \Yc]\right\}=\tr(C^{\mathsf b}_e), \label{eqn:mmseCe}
\end{align}
where $C^{\mathsf b}_e$ has the following two alternative forms
\begin{align}
C^{\mathsf b}_e &=  C_{\Xc}- C_{\Xc}   \Qb^\fordagger(\Qb  C_{\Xc} \Qb^\fordagger+C_{\Zc})^{-1} \Qb  C_{\Xc},   \label{Ce1}
\\&= ( \Qb ^\fordagger C_{\Zc}^{-1}  \Qb+ C_{\Xc}^{-1}  )^{-1}. \label{Ce2}
\end{align}

\subsection{Some helpful facts}\label{section:helpful-facts}
We provide a number of facts about the matrices that we have introduced before. These facts can be directly verified, and will be repeatedly used in the proofs. We have listed these facts here to improve the presentation of the proofs. 
\begin{enumerate}[label=(\roman*)]
\item  \label{fact0} We have
\begin{align} \Qb  \Qb^\fordagger&=\bigoplus_{i=1}^k  Q_i  Q_i^\fordagger=\begin{pmatrix}{ {Q}_{1} {Q}_{1}^\fordagger} & 0 &0& \cdots & 0
\\0 & { {Q}_{2} {Q}_{2}^\fordagger }&0& \cdots&0
\\\vdots&\vdots&\vdots&\vdots&\vdots
\\0 & 0 &0& \cdots & { {Q}_{k} {Q}_{k}^\fordagger} 
 \end{pmatrix}_{m\times m}
\end{align}
and
\begin{align}
\Qa  \Qa^\fordagger&=\begin{pmatrix}{ {Q}_{1} {Q}_{1}^\fordagger} & { {Q}_{1} {Q}_{2}^\fordagger } &{ {Q}_{1} {Q}_{3}^\fordagger  } & \cdots & { {Q}_{1} {Q}_{k}^\fordagger }
\\{Q_{2} Q_{1}^\fordagger } & {Q_{2} Q_{2}^\fordagger }&{Q_{2} Q_{3}^\fordagger } & \cdots&{Q_{2} Q_{k}^\fordagger}
\\\vdots&\vdots&\vdots&\vdots&\vdots
\\{Q_{k} {Q}_{1}^\fordagger} & {Q_{k} Q_{2}^\fordagger} &{Q_{k} Q_{3}^\fordagger} & \cdots & {Q_{k} Q_{k}^\fordagger} 
 \end{pmatrix}_{m\times m},\end{align}
where $m=\sum_{i=1}^{k} m_i$.
\item \label{fact1} The rows of  matrix $Q_{i}$ are vectors of norm $\sqrt{N}$. Therefore, the matrix $Q_{i} Q_{i}^\fordagger$, for $i = 1, 2,\cdots, k$, is of size $m_i\times m_i$ and with diagonal entries equal to $N$ regardless of the value of $t_{ij}$. Therefore, $(\Qa  \Qa^\fordagger)_{\diag}=(\Qb  \Qb^\fordagger)_{\diag}=NI_{m\times m}$.

\item \label{fact2} When, $t_{ij}\in\{0,T/N, \cdots, (N-1)T/N \}$ and are distinct, the rows of  matrix $ {Q}_{i}$ will be perpendicular to each other. Therefore, the matrix $Q_{i} Q_{i}^\fordagger$ will be equal to $NI_{m_i\times m_i}$. Similarly, if  $t_{ij}$ are distinct for all $i,j$, the rows of $Q_{i}$ and $Q_{j}$ for $i\neq j$ will be perpendicular to each other. Therefore, $Q_{i} Q_{j}^\fordagger=0$ for $i\neq j$ in this case. Hence, using the definition of $\Qa$ and  $\Qb$
given in \eqref{eqn:defnsA6} and \eqref{defQbA}, both $\Qa \Qa^\fordagger$ and $\Qb \Qb^\fordagger$ will become diagonal matrices $NI_{m\times m}$, where $m=\sum_{i=1}^{k} m_i$. 

\item \label{fact3}  The diagonal elements of matrices $Q_i^\fordagger Q_i$ for $i = 1, 2, \cdots, k$ give us the norm of the column vectors of $Q_i$. They can be calculated as follows:
\begin{align}
Q_i^\fordagger Q_i(l,l) =\begin{cases}\sum_{j=1}^{{m_i}}\cos^2\left((N_1+l-1)\omega t_{ij}\right);&~~\text{for }1\leq l\leq N
\\
\\\sum_{j=1}^{m_i}\sin^2\left((N_1+l-N-1)\omega t_{ij}\right);&~~\text{for }N+1\leq l\leq 2N\end{cases}\label{QidaggerQi}.
\end{align}
When we use  the uniform sampling strategy, i.e.,  $\{t_{ij}\}=\{0, T/m_i, 2T/m_i, \cdots, (m_i-1)T/m_i\}$, the diagonal entries of $Q_i^\fordagger Q_i$ will become equal to $m_i/2$. This is because, for instance,
\begin{align}\sum_{j=1}^{{m_i}}\cos^2\left((N_1+l-1)\omega t_{ij}\right)&=\sum_{j=0}^{{m_i-1}}\cos^2\left((N_1+l-1)\omega \frac{T}{m_i}j\right)\nonumber
\\&=\sum_{j=0}^{{m_i-1}}\left(\frac{1}{2}+\frac{1}{2}\cos\Big(2(N_1+l-1)\frac{2\pi}{m_i}j\Big)\right)\nonumber
\\&=\frac{m_i}{2},\label{mi2rr}
\end{align}
where  \eqref{mi2rr} follows from the fact that $m_i > 2N_2$. Moreover, the off-diagonal entries will be zero.   For example, consider the entry
\begin{align}
Q_i^\fordagger Q_i(2,3) = \sum_{j=1}^{{m_i}} \cos\left((N_1+1)\omega t_{ij}\right)\cos\left((N_1+2)\omega t_{ij}\right)
&=\sum_{j=0}^{{m_i-1}}  \frac12 \left( \cos( \frac{\omega T}{m_i}j)+\cos\big((2N_1+3) \frac{\omega T}{m_i}j\big)\right)\nonumber
\\&=\sum_{j=0}^{{m_i-1}}  \frac12 \left( \cos( \frac{2\pi}{m_i}j)+\cos\big((2N_1+3) \frac{2\pi}{m_i}j\big)\right)\nonumber
\\&=0.\label{mi2}
\end{align}
In fact, with uniform sampling, different columns of the matrix $Q_i$ will be perpendicular to each other and $Q_i^\fordagger Q_i$ will become $(m_i/2)I_{2N\times 2N}$.
\end{enumerate}

\section{Proofs} \label{proofs} 

In this section, we state the proofs of our results. In the body of the proofs, we have used some lemmas, which are provided in  the appendix.

\subsection{Proof of Equation \eqref{Lower1S}} \label{proofsLower1S}

We start by computing the average distortion using equations \eqref{parseval1}, \eqref{eqn:mmseCe1} and  \eqref{Ce1St} as follows
\begin{align}2\mathsf{D}_{\mathsf a}=\tr (C^{\mathsf a}_e ) &= \tr \left( C_{\mathbf{X}}- C_{\mathbf{X}}   \Qa^\fordagger(\Qa  C_{\mathbf{X}} \Qa^\fordagger+C_{\Ztilde})^{-1} \Qa  C_{\mathbf{X}}\right)\nonumber\\
&=\tr (C_{\mathbf{X}})- \tr \left((\Qa  C_{\mathbf{X}} \Qa^\fordagger+C_{\Ztilde})^{-1} \Qa  C_{\mathbf{X}}^2  \Qa^\fordagger \right)\label{traceprop}\\
&= 2N - \tr \left((\Qa  \Qa^\fordagger+C_{\Ztilde})^{-1} \Qa   \Qa^\fordagger \right),\label{Cxidentity}
\end{align}
where  \eqref{traceprop} results from the cyclic property of the trace  and \eqref{Cxidentity} follows from the fact that $C_{\mathbf{X}}=I_{2N\times 2N}$.

We would like to show that for any arbitrary choice of sampling time instances, $t_{ij}$,  the average distortion will be  bounded from below as follows:
\begin{align}2\mathsf{D}_{\mathsf a} &= 2N- \tr \left[ (\Qa  \Qa^\fordagger+C_{\Ztilde})^{-1} \Qa  \Qa^\fordagger \right]  \nonumber
\\&\geq 2N - N \sum_{i=1}^{k} \frac{m_i}{(1+\eta)N+\sigma_i^2}.\label{used}
\end{align}
In other words, from \eqref{Cxidentity}, we wish to prove that
\begin{align} \tr\left((\Qa  \Qa^\fordagger+C_{\Ztilde})^{-1} \Qa   \Qa^\fordagger \right) \leq  \sum_{i=1}^{k} \frac{Nm_i}{(1+\eta)N+\sigma_i^2}.\label{QCxQinvQCxQ1}
\end{align}
Equivalently, if we use \eqref{defnCtildeZ} to replace $C_{\Ztilde}$ with $(\eta \Qb \Qb^\fordagger+C_{\Zc})$, we would like to show that
\begin{align} \tr\left((\Qa  \Qa^\fordagger+\eta \Qb \Qb^\fordagger+C_{\Zc})^{-1} \Qa   \Qa^\fordagger \right) \leq  \sum_{i=1}^{k} \frac{Nm_i}{(1+\eta)N+\sigma_i^2}.\label{QCxQinvQCxQ1N}
\end{align}
This can be derived using  Theorem  \ref{Lemmaab} (given in Appendix \ref{app}) with matrices $F=\Qa  \Qa^\fordagger+\eta \Qb \Qb^\fordagger$, $G=\Qa  \Qa^\fordagger$ and $C=C_{\Zc}$. From Fact \ref{fact0} of Section \ref{section:helpful-facts}, observe that  the matrices $F$ and $G$ are of the forms
\begin{align} 
F=\begin{pmatrix}{(1+\eta)Q_1 Q_1^\fordagger} & {Q_1 Q_2^\fordagger } &{Q_1 Q_3^\fordagger  } & \cdots & {Q_1 Q_k^\fordagger }
\\{Q_2 Q_1^\fordagger } & {(1+\eta)Q_2 Q_2^\fordagger }&{Q_2 Q_3^\fordagger } & \cdots& {Q_2 Q_k^\fordagger}
\\\vdots&\vdots&\vdots&\vdots&\vdots
\\{Q_k Q_1^\fordagger} & {Q_k Q_2^\fordagger} &{Q_k Q_3^\fordagger} & \cdots & {(1+\eta)Q_k Q_k^\fordagger} 
 \end{pmatrix}_{m\times m}\label{RhoNN-1}
\end{align}
and
\begin{align} 
G=\Qa  \Qa^\fordagger=\begin{pmatrix}{Q_1 Q_1^\fordagger} & {Q_1 Q_2^\fordagger } &{Q_1 Q_3^\fordagger  } & \cdots & {Q_1 Q_k^\fordagger }
\\{Q_2 Q_1^\fordagger } & {Q_2 Q_2^\fordagger }&{Q_2 Q_3^\fordagger } & \cdots&{Q_2 Q_k^\fordagger}
\\\vdots&\vdots&\vdots&\vdots&\vdots
\\{Q_k Q_1^\fordagger} & {Q_k Q_2^\fordagger} &{Q_k Q_3^\fordagger} & \cdots & {Q_k Q_k^\fordagger} 
 \end{pmatrix}_{m\times m}\label{RhoNN2-1}. 
\end{align}
 These matrices satisfy the required properties of Theorem \ref{Lemmaab}, \emph{i.e.,} the matrices $F$ and $G$ are positive semi-definite and $G = F \circ L$, where the matrix $L$ has the form of  \eqref{RhoNN2d} with parameters $a= 1+\eta$ and $b=1$. Therefore, we have the following inequality
\begin{align} \tr \left[ \big( F+ C\big)^{-1}G\right ]& \leq \tr \left[ \big( F_{\diag}+ C\big)^{-1}G_{\diag}\right ].
\end{align}

 Hence,  from Fact \ref{fact1} of Section \ref{section:helpful-facts} which states that $F_\diag=(\Qa  \Qa^\fordagger)_\diag+\eta (\Qb \Qb^\fordagger)_\diag=N(1+\eta)I_{m\times m}$ and $G_\diag=(\Qa  \Qa^\fordagger)_\diag=NI_{m\times m}$, the desired inequality in \eqref{QCxQinvQCxQ1N} concludes.

Furthermore, when $\sum_{i=1}^km_i\leq N$,  we would like to show that this lower bound is tight  if  we take distinct time instances, $t_{ij}$, from the set  $\{0,T/N, \cdots, (N-1)T/N \}$. Observe that this is possible since $\{0,T/N, \cdots, (N-1)T/N \}$ has $N$ elements.
 Fact \ref{fact2} from  Section \ref{section:helpful-facts} states that  both the matrices $\Qa \Qa^\fordagger$ and $\Qb \Qb^\fordagger$ will become diagonal matrices $NI_{m\times m}$, and thus $C_{\Ztilde}$ given in 
\eqref{defnCtildeZ} will be 
\begin{align}
C_{\Ztilde} &=\eta \Qb \Qb^\fordagger+C_{\Zc}=\eta N I_{m \times m} +C_{\Zc}. \nonumber
\end{align}
 Hence, the minimum  distortion will be 
\begin{align}2\mathsf{D}_{\mathsf{a} \min}&=  2N-  N \cdot\tr\left[((1+\eta)N I +C_{\Zc})^{-1} \right], \nonumber
\\&=2N- \sum_{i=1}^{k} \frac{Nm_i}{(1+\eta)N+\sigma_i^2},\label{eqn:Czsubstituted}
\end{align}
where \eqref{eqn:Czsubstituted} is derived using the definition of the diagonal matrix $C_{\Zc}$ given in \eqref{eqn:defCz18}.
 \hfill\ensuremath{\square}

\subsection{Proof of equation \eqref{Lower2S}:} \label{proofsLower2S}

To compute the minimum average distortion, from  $\Da = 1/2  \tr(C^{\mathsf a}_e)$, we use the alternative form of LMMSE (given in \eqref{Ce2St}), in which $C^{\mathsf a}_e$ is of the form
\begin{align} 
C^{\mathsf a}_e  =  ( \Qa^\fordagger C_{\Ztilde}^{-1}  \Qa+C_{\mathbf{X}}^{-1}  )^{-1}.\label{form2}
\end{align}
Hence, the average distortion will be
\begin{align}
2\Da =   \tr(C^{\mathsf a}_e)  =  \tr\left[\left(\sum_{i =1}^{k} {Q_i^\fordagger (\eta Q_i Q_i^\fordagger +\sigma_i^2 I)^{-1}  Q_i}+ I\right)^{-1}\right],\label{lefthandside521}
\end{align}
which results from the facts that $C_{\Ztilde}=\eta \Qb \Qb^\fordagger +C_{\Zc}$ (Section \ref{secReconSt}) and $C_{\mathbf{X}} =I_{2N \times 2N}$.

Using the matrix identity given in \eqref{alternativeformmatrix} for matrix $A$ to be  $\sqrt\eta Q_i$, we have 
\begin{align}
 Q_i^\fordagger \left(\eta Q_i Q_i^\fordagger +C_{\Z_i}\right)^{-1}  Q_i
=  \left( \eta Q_i^\fordagger C_{\Z_i}^{-1} Q_i+I\right) ^{-1} Q_i^\fordagger C_{\Z_i}^{-1} Q_i.\nonumber
\end{align}

Therefore, the matrix in  the left-hand side  of \eqref{lefthandside521} will be
\begin{align}
\sum_{i =1}^{k} Q_i^\fordagger \left(\eta Q_i Q_i^\fordagger +C_{\Z_i}\right)^{-1}  Q_i
&= \sum_{i =1}^{k} \left( \eta Q_i^\fordagger C_{\Z_i}^{-1} Q_i+I\right) ^{-1} Q_i^\fordagger C_{\Z_i}^{-1} Q_i\nonumber
\\&=  \frac{1}{\eta} \sum_{i =1}^{k} \left(\eta Q_i^\fordagger Q_i+\sigma_i^2 I \right)^{-1} \left (\eta Q_i^\fordagger Q_i+\sigma_i^2 I-\sigma_i^2 I \right)\label{Czi}
\\&= \frac{1}{\eta} \sum_{i =1}^{k} \left(I - \left( \eta \frac{ Q_i^\fordagger Q_i}{\sigma_i^2} +I\right)^{-1}\right)\nonumber
\\& = \frac{1}{\eta} \sum_{i =1}^{k} (I-B_i)\label{Bi}
\\& = \frac{1}{\eta} (k I -\sum_{i =1}^{k}B_i)\nonumber
\\& \leq  \frac{1}{\eta} \left(kI -k^2\Big(\sum_{i =1}^{k}B_i^{-1}\Big)^{-1}\right),\label{Generalized}
\end{align}
where  \eqref{Czi} results  from the fact that $C_{\Z_i}=\sigma^2_i I$, and  in \eqref{Bi}  matrix $B_i $  stands for $\left( \eta { Q_i^\fordagger Q_i}/{\sigma_i^2} +I\right)^{-1} $.  Moreover, \eqref{Generalized} is derived using Lemma \ref{Generalizedmean} for positive definite matrices $B_i$ with the equality if and only if $B_1 = B_2 = \cdots = B_k$. Notice  that  $0 < B_i \leq  I$. 

For any two symmetric positive definite matrices $A$ and $B$,  the relation $A \leq  B$ implies that $B^{-1} \leq  A^{-1}$. This is because the function  $f(t)=-t^{-1}$  is operator monotone \cite{Eric}. 
 Hence,  \eqref{Generalized} implies that 
\begin{align}
\left(I+\sum_{i =1}^{k} {Q_i^\fordagger (\eta Q_i Q_i^\fordagger +\sigma_i^2 I)^{-1}  Q_i}\right)^{-1} 
&\geq    \left[ I+\frac{1}{\eta} \left({k} I -k^2\Big(\sum_{i =1}^{k}B_i^{-1}\Big)^{-1}\right) \right]^{-1}\nonumber.
\end{align}
Let $A=\sum_{i =1}^{k}B_i^{-1}=\sum_{i =1}^{k}\left( \eta { Q_i^\fordagger Q_i}/{\sigma_i^2} +I\right) $. Then,   the  relation between the traces of the above matrices is  
\begin{align}
\tr\left(I+\sum_{i =1}^{k} {Q_i^\fordagger (\eta Q_i Q_i^\fordagger +\sigma_i^2 I)^{-1}  Q_i}\right)^{-1} \nonumber
&\geq \tr\left[ I+\frac{1}{\eta} \left({k} I -k^2A^{-1}\right) \right]^{-1}
\\&\geq  \tr\left[ I+\frac{1}{\eta} \left({k} I -k^2A_{\diag}^{-1}\right) \right]^{-1}\label{PeilersBi},
\end{align}
where \eqref{PeilersBi} results from Lemma \ref{PeirelsLemma} for the convex function $f(t) = (1+k/\eta -k^2t^{-1}/\eta )^{-1}$ when $t \geq k $ and the Hermitian matrix $A\geq kI$. 
 To find $A_{\diag}$, we need to calculate the diagonal entries of matrix $A$. They are
 \begin{align}
A(l,l) =\begin{cases}\sum_{i=1}^{{k}}\left( 1+\sum_{j=1}^{{m_i}}\frac{\eta}{\sigma_i^2}\cos^2\big((N_1+l-1)\omega t_{ij}\big)\right)\triangleq  a_{ \ell};&~~\text{for }1\leq l\leq N
\\
\\\sum_{i=1}^{{k}}\left(1+\sum_{j=1}^{m_i}\frac{\eta}{\sigma_i^2}\sin^2\big((N_1+l-N-1)\omega t_{ij}\big)\right)\triangleq  b_{ \ell};&~~\text{for }N+1\leq l\leq 2N\end{cases}\label{AidaggerAi},
\end{align}
since the diagonal elements of matrices $Q_i^\fordagger Q_i$ for $i = 1, 2, \cdots, k$  are of the following forms (Fact \ref{fact3} from Section \ref{section:helpful-facts})
\begin{align}
Q_i^\fordagger Q_i(l,l) =\begin{cases}\sum_{j=1}^{{m_i}}\cos^2\left((N_1+l-1)\omega t_{ij}\right);&~~\text{for }1\leq l\leq N
\\
\\\sum_{j=1}^{m_i}\sin^2\left((N_1+l-N-1)\omega t_{ij}\right);&~~\text{for }N+1\leq l\leq 2N\end{cases}\label{QidaggerQi}.
\end{align}
Substituting the diagonal entries of the matrix $A$  in \eqref{PeilersBi}, we obtain
\begin{align}
\tr\left(I+\sum_{i =1}^{k} {Q_i^\fordagger (\eta Q_i Q_i^\fordagger +\sigma_i^2 I)^{-1}  Q_i}\right)^{-1} \nonumber
&\geq  \tr\left[ I+\frac{1}{\eta} \left({k} I -k^2A_{\diag}^{-1}\right) \right]^{-1}\nonumber
\\& = \sum_{\ell =1}^{N} \frac{1}{(1+\frac{k}{\eta})-\frac{k^2}{\eta} a_{\ell}^{-1}}+\frac{1}{(1+\frac{k}{\eta})-\frac{k^2}{\eta } b_{\ell}^{-1}}\nonumber
\\&\geq \sum_{\ell =1}^{N} \frac{2}{(1+\frac{k}{\eta})-\frac{k^2}{\eta} (k+\eta \sum_{i=1}^{k} \frac{m_i}{2\sigma_i^2})^{-1} }\label{arithmatic}
\\&=  \frac{2N}{(1+\frac{k}{\eta})-\frac{k^2}{\eta} (k+\eta \sum_{i=1}^{k} \frac{m_i}{2\sigma_i^2})^{-1} },\label{m1minimizesD}
\end{align}
where \eqref{arithmatic} results from convexity of the function $f(t) = (1-a t^{-1})^{-1}$ for $t \geq a $. 

Now suppose that each $m_i > 2N_2$  for $i =1, 2, \cdots, k$. If we uniformly sample the signals, \emph{i.e.}, sample  $S_i(t)$ at time instances $\{0, T/m_i, 2T/m_i, \cdots, (m_i-1)T/m_i\}$, from Fact \ref{fact3} of Section \ref{section:helpful-facts} we conclude that the equality in the above equations holds, and thus the this lower bound is achieved.

 \hfill\ensuremath{\square}

\subsection{Proof of Equation \eqref{Lower1corrupted}:} \label{proofsLower1Si}
To compute the average distortion, here we use equations \eqref{parseval2}, \eqref{eqn:mmseCe}  and \eqref{Ce2}. Hence,
\begin{align}2\Db&=\tr \left(C_{\Xc}- C_{\Xc}   \Qb^\fordagger(\Qb  C_{\Xc} \Qb^\fordagger+C_{\Zc})^{-1} \Qb  C_{\Xc}\right)\nonumber
\\&=\tr( C_{\Xc})- \tr\left((\Qb  C_{\Xc} \Qb^\fordagger+C_{\Zc})^{-1} \Qb  C_{\Xc}^2\Qb^\fordagger\right)\label{permute}
\\&= k(2N)(1+\eta) - \tr\left((\Qb  C_{\Xc} \Qb^\fordagger+C_{\Zc})^{-1} \Qb  C_{\Xc}^2 \Qb^\fordagger\right),\label{QbCxQb}
\end{align}
where  \eqref{permute} and \eqref{QbCxQb} are achieved, respectively,  by the trace  cyclic property  and the fact that 
$ C_{\Xc}= \Lambda \otimes I_{2N\times2N}$ (The matrix $\Lambda$ has been defined in \eqref{Rho}).  

Here we are interested in reconstructing the signals $S_i(t)$. 
Following the similar steps from Subsection \ref{proofsLower1S}, we use Theorem \ref{Lemmaab} with the choice of $F=\Qb  C_{\Xc} \Qb^\fordagger$, $G=\Qb  C_{\Xc}^2 \Qb^\fordagger$ and $C=C_{\Zc}$. To demonstrate that these  matrices have the required properties of Theorem \ref{Lemmaab},
one can verify (by explicit evaulation) that $F$ has the same expression as in \eqref{RhoNN-1}, \emph{i.e.},
$F=\Qb  C_{\Xc} \Qb^\fordagger=
\Qb  (\Lambda \otimes I_{2N\times2N}) \Qb^\fordagger
$ is also equal to $
\Qa  \Qa^\fordagger+\eta \Qb \Qb^\fordagger.
$ Furthermore to compute $G$, 
 observe that $C_X ^2= \Lambda^2 \otimes I_{2N\times2N}$ in which  ${\Lambda}^2$ is a matrix of the following form
\begin{align}
\Lambda^2(i,j)=\begin{cases} (1+\eta)^2+(k-1) \triangleq \alpha  ~~ ; ~~i=j,\label{eqalpha}
\\
\\ 2(1+\eta)+(k-2)\triangleq  \beta ~~;~~i\neq j,\end{cases}
\end{align}
for $i , j = 1,  2, \cdots, k$. Then, one can verify that  
\begin{align} 
G=\begin{pmatrix}{\alpha Q_1 Q_1^\fordagger} & {\beta Q_1 Q_2 ^\fordagger } &{\beta Q_1 Q_3^\fordagger  } & \cdots & {\beta Q_1 Q_k^\fordagger }
\\{\beta Q_2 Q_1^\fordagger } & {\alpha Q_2 Q_2^\fordagger }&{\beta Q_2 Q_3^\fordagger } & \cdots&{\beta Q_2 Q_k^\fordagger}
\\\vdots&\vdots&\vdots&\vdots&\vdots
\\{\beta Q_k Q_1^\fordagger} & {\beta Q_k Q_2^\fordagger} &{\beta Q_k Q_3^\fordagger} & \cdots & {\alpha Q_k Q_k^\fordagger} 
 \end{pmatrix}_{m\times m}\label{RhoNN2}.
\end{align}
 Therefore, applying  Theorem \ref{Lemmaab} for the matrices $F$, $G$ and $C=C_{\Zc}$, we have 
\begin{align} \tr\left((\Qb  C_{\Xc} \Qb^\fordagger+C_{\Zc})^{-1} \Qb  C_{\Xc}^2 \Qb^\fordagger \right) 
&= \tr[(F+C_{\Zc})^{-1}G]\nonumber
\\&\leq \tr[(F_\diag+C_{\Zc})^{-1}G_\diag]\nonumber
\\&=  \tr \left[\big( (1+\eta)N I +C_{\Zc}\big)^{-1}\alpha N I \right]\label{diagFG}
\\& =  \sum_{i=1}^{k} \frac{\alpha N m_i}{(1+\eta)N+\sigma_i^2},\label{QCxQinvQCxQ}
\end{align}
where  \eqref{diagFG} results from the fact that the diagonal entries of the matrices $F$ and $G$ are $(1+\eta)N$ and $\alpha N$, respectively (Fact \ref{fact1} from Section \ref{section:helpful-facts} and  \eqref{RhoNN2}).
Consequently, the average distortion, for any arbitrary choice of sampling times, can be  bounded  as
\begin{align}
2\mathsf{D}_{\mathsf b} &= 2Nk(1+\eta)- \tr \left( (\Qb  C_{\Xc} \Qb^\fordagger+C_{\Zc})^{-1} \Qb  C_{\Xc}^2 \Qb^\fordagger \right)  \nonumber
\\&\geq 2Nk(1+\eta) -\sum_{i=1}^{k} \frac{\alpha N m_i}{(1+\eta)N+\sigma_i^2},\nonumber
\end{align}\\
in which  $\alpha$ is the one  defined in \eqref{eqalpha}. 

Furthermore,  we would like to show that the equality holds when $\sum_{i=1}^km_i\leq N$ and we take distinct time instances, $t_{ij}$, from the set  $\{0,T/N, \cdots, (N-1)T/N \}$. To show that 
 we compute the average distortion when $t_{ij}$ are distinct and belong to $\{0,T/N, \cdots, (N-1)T/N \}$ for all $i,j$. Using Fact \ref{fact2} from  Section \ref{section:helpful-facts}, the two matrices $F$ and $G$ will become diagonal matrices of the forms:
\begin{align}
F=\Qb  C_{\Xc} \Qb^\fordagger = (1+\eta) NI_{m\times m},
~~~G=\Qb  C_{\Xc}^2 \Qb^\fordagger = \alpha NI_{m\times m}
\end{align}
 Hence, the  average distortion will be 
\begin{align}2\mathsf{D}_{\mathsf b\min} &= 2Nk(1+\eta)- \tr \left( ((1+\eta) NI_{m\times m}+C_{\Zc})^{-1} \alpha NI_{m\times m} \right) 
 \\&= 2Nk(1+\eta)- \sum_{i=1}^{k} \frac{\alpha N m_i}{(1+\eta)N+\sigma_i^2}\label{eqn:Czsub}
\end{align}
where \eqref{eqn:Czsub} is derived using the diagonal matrix $C_{\Zc}$ of \eqref{eqn:defCz18}.
\hfill\ensuremath{\square}

\subsection{ Proof of Equation \eqref{Lower2corrupted}:} \label{proofsLower2Si}
Here, we divide the proof into two parts. In the first part, we show  that 
$$\mathsf{D}_{ \mathsf b\min} \geq N \cdot\tr\left( \diag ([\frac{m_1}{2\sigma_1^2}, \frac{m_2}{2\sigma_2^2}, \cdots, \frac{m_k}{2\sigma_k^2}])+\Gamma\right)^{-1},$$
and then we show that when $m_i > 2N_2$ (for  each $i \in {1, 2, \cdots, k}$), 
the  optimal sampling strategy is uniform sampling and  the minimum distortion is equal to 
$$\mathsf{D}_{ \mathsf b\min}=N \cdot\tr\left( \diag ([\frac{m_1}{2\sigma_1^2}, \frac{m_2}{2\sigma_2^2}, \cdots, \frac{m_k}{2\sigma_k^2}])+\Gamma\right)^{-1}.$$
In the second part, we simplify the above equation to obtain the expression given in the statement of the theorem. 

\noindent
\textbf{Part (i):} 
To compute the minimum average distortion, we use $\Db = 1/2  \tr(C^{\mathsf b}_e)$ and   the alternative form of LMMSE,  
where $C^{\mathsf b}_e$ is of the form
\begin{align} 
C^{\mathsf b}_e  = \left(\Qb^\fordagger  C_{\Zc}^{-1} \Qb+C_{\Xc}^{-1}\right)^{-1}.\label{form2}
\end{align}
In the above formula,
\begin{align}  C_{\Zc}^{-1} =\bigoplus_{i=1}^k \frac{1}{\sigma^2_i}I_{2N\times2N}, \quad C_{\Xc}^{-1} = \Gamma_{k\times k} \otimes I_{2N\times2N},\end{align}
where the matrix $\Gamma$ is the inverse of the matrix $\Lambda$, given in \eqref{Rho}. Therefore, the average distortion will be
\begin{align}
2\Db =   \tr(C^{\mathsf b}_e)  =  \tr\left[\left(\bigoplus_{i =1}^{k}\frac{Q_i^\fordagger Q_i}{\sigma_i^2}\right)+\Gamma\otimes I\right]^{-1}.\label{eqn:L84}
\end{align}

First notice that due to Lemma \ref{Hornlemma}, matrices   $\Gamma\otimes I$ and  $I\otimes \Gamma$ are permutation similar, i.e., there exists a unique  permutation matrix $P(k, 2N)$ of size $2Nk \times 2Nk$ such that
$$\Gamma\otimes I =P(k, 2N)^T ( I \otimes \Gamma) P(k, 2N),$$ and moreover, $P(k, 2N)$ has the property 
$$P(k, 2N) = P(k, 2N)^T =P(k, 2N)^{-1}.$$ 
Using the permutation matrix $P(k, 2N)$ and  the cyclic property of  the trace, we have 
\begin{align}
2\Db =   \tr(C^{\mathsf b}_e)  &= \tr\left(P(k, 2N)^T C^{\mathsf b}_eP(k, 2N)\right)\nonumber
 \\ &=\tr\left[P(k, 2N)^T  \left(\bigoplus_{i =1}^{k}\frac{ Q_i^\fordagger Q_i}{\sigma_i^2}\right)P(k, 2N)+P(k, 2N)^T \left(\Gamma\otimes I\right) P(k, 2N)\right]^{-1}\nonumber
\\ &=\tr\left(H+I\otimes \Gamma\right)^{-1},
\end{align}
where matrix $H$ denotes the permuted matrix $P(k, 2N)^T (\bigoplus_{i =1}^{k}{ Q_i^\fordagger Q_i}/{\sigma_i^2})P(k, 2N)$. Notice that the matrix 
$$\bigoplus_{i =1}^{k}{ Q_i^\fordagger Q_i}=\sum_{i=1}^k G_i\otimes Q_i^\fordagger Q_i,$$
where $G_i$ is a $k\times k$ matrix defined as follows: $G_i(i',j')=0$ if $(i',j')\neq (i,i)$, and $G_i(i',j')=1$  if $(i',j')=(i,i)$. Therefore, the matrix $H$ can be written as 
\begin{align}H&=P(k, 2N)^T (\bigoplus_{i =1}^{k}{ Q_i^\fordagger Q_i}/{\sigma_i^2})P(k, 2N)\nonumber
\\&=\sum_{i=1}^k Q_i^\fordagger Q_i\otimes G_i.
\end{align}
If we partition $H$ into $k\times k$ submatrices $H_{ij}$ as follows
\begin{align}
H=\begin{pmatrix}
{H_{11}} &{H_{12}}&\cdots&H_{1(2N)}
\\{H_{21}} & {H_{22}}& \cdots&{H_{2(2N)}}
\\\vdots&\vdots&\vdots&\vdots
\\{H_{(2N)1}}&{H_{(2N)2}}& \cdots &{H_{(2N)(2N)}} \end{pmatrix},\label{MatrixH}
\end{align}
 all the $H_{ij}$ submatrices will be diagonal matrices because they are weighted sums of diagonal matrices $G_i$. More precisely, the submatrices $H_{ll}$ for $l\leq N$ can be computed as follows, using Fact \ref{fact3} from Section \ref{section:helpful-facts} that gives us the diagonal entries of $Q_i^\fordagger Q_i$ (the  entries of matrix $H_{ll}$ for $1 \leq l \leq N$  and   $N\leq l \leq 2N$ are the $l$th  and the $(l+N)$th diagonal entries of  matrices $Q_i^\fordagger Q_i/\sigma_i^2$, given in \eqref{QidaggerQi}), respectively):
\begin{align}
H_{ll}=\begin{pmatrix}
{\frac{1}{\sigma_1^2} \sum_{j=1}^{{m_1}}{\cos^2\scriptstyle \left((N_1+l-1)\omega t_{1j}\right)}} & {0 } &{0 } & \cdots & {0 }
\\{0 } & {\frac{1}{\sigma_2^2} \sum_{j=1}^{m_2}\cos^2\scriptstyle \left((N_1+l-1)\omega t_{2j}\right)}&{0 } & \cdots&{0 }
\\\vdots&\vdots&\vdots&\vdots&\vdots
\\{0 } &{0 }&{0 } & \cdots & {\frac{1}{\sigma_k^2} \sum_{j=1}^{{m_k}}\cos^2\scriptstyle \left((N_1+l-1)\omega t_{kj}\right)} \end{pmatrix}_{k\times k}\nonumber
\end{align}
and for $l=N+1 , N+2, \cdots, 2N$,
\begin{align}
H_{ll}=\begin{pmatrix}
{\frac{1}{\sigma_1^2} \sum_{j=1}^{{m_1}}\sin^2\scriptstyle \left((N_1+l-N-1)\omega t_{1j}\right)} &{0 }  &{0 }  & \cdots & {0 } 
\\{0 }  & {\frac{1}{\sigma_2^2} \sum_{j=1}^{{m_2}}\sin^2\scriptstyle \left((N_1+l-N-1)\omega t_{2j}\right)}&{0 } & \cdots&{0 } 
\\\vdots&\vdots&\vdots&\vdots&\vdots
\\{0 } & {0 }  &{0 }  & \cdots & {\frac{1}{\sigma_k^2}\sum_{j=1}^{{m_k}}\sin^2\scriptstyle \left((N_1+l-N-1)\omega t_{kj}\right)} \end{pmatrix}_{k\times k}\nonumber
\end{align}
Applying  Lemma \ref{PeirelsLemma} for the  Hermitian matrix $A$  and the convex function $f(x)= x^{-1}$ for $x > 0$,  we attain a lower bound on the average distortion as:
\begin{align}
2\Db =   \tr(C^{\mathsf b}_e) &=\tr\left(H+I\otimes \Gamma\right)^{-1},\nonumber
\\ &\geq  \tr\left(H_{\mathsf{B}\diag}+I\otimes \Gamma\right)^{-1},\label{eqnL91}
\end{align}
in which  the matrix $H_{\mathsf{B}\diag}$ is the block diagonal form of matrix $H$, where all the submatrices other than the $2N$  block diagonal submatrices, ${H}_{ll}$, are zero. Consequently, the matrix  $(H_{\mathsf{B}\diag}+I\otimes \Gamma)$ is of the form
 \begin{align}
(H_{\mathsf{B}\diag}+I\otimes \Gamma)=\begin{pmatrix}
{H_{11}+\Gamma} & {0 } &{ \cdots}  & {0 } 
\\{0 } & {H_{22}+\Gamma} & {\cdots} & {0 }  
\\ \vdots&\vdots&\vdots&\vdots
\\{0 }& {0 } & \cdots&{H_{(2N)(2N)}+\Gamma}\end{pmatrix}
_{2Nk\times 2Nk}.
\end{align}
Therefore,
$$\tr\left(H_{\mathsf{B}\diag}+I\otimes \Gamma\right)^{-1}=\sum_{l=1}^{2N}
\tr\left(H_{ll}+\Gamma\right)^{-1}.$$
Using  Lemma \ref{jointlyconvex} (given in Appendix \ref{app}), for any $1\leq l\leq N$, we obtain
\begin{align}\frac{1}{2}(\tr\left(H_{ll}+\Gamma\right)^{-1}+
\tr\left(H_{(l+N)(l+N)}+\Gamma\right)^{-1})
&\geq \tr\left(\frac{1}{2}(H_{ll}+H_{(l+N)(l+N)})+\Gamma\right)^{-1}
\\&=\tr\left(\diag([\frac{m_1}{2\sigma_1^2}, \frac{m_2}{2\sigma_2^2}, \cdots, \frac{m_k}{2\sigma_k^2}])+\Gamma\right)^{-1}.
\end{align}
Consequently, 
 \begin{align}\
\tr\left(H_{\mathsf{B}\diag}+I\otimes \Gamma\right)^{-1}=\sum_{l=1}^{2N}
\tr\left(H_{ll}+\Gamma\right)^{-1} \geq
2N \cdot\tr\left( \diag ([\frac{m_1}{2\sigma_1^2}, \frac{m_2}{2\sigma_2^2}, \cdots, \frac{m_k}{2\sigma_k^2}])+\Gamma\right)^{-1}.
\end{align}
Therefore, from \eqref{eqnL91} for any choice of sampling time instances $t_{ij}$, we obtain
\begin{align}
\Db\geq  N \cdot\tr\left( \diag ([\frac{m_1}{2\sigma_1^2}, \frac{m_2}{2\sigma_2^2}, \cdots, \frac{m_k}{2\sigma_k^2}])+\Gamma\right)^{-1}. \label{Dblowerupper}
\end{align}

Moreover, suppose that each $m_i > 2N_2$  for $i =1, 2, \cdots, k$. If we employ uniform strategy, the matrices $Q_i^\fordagger Q_i$ ($i = 1, 2, \cdots, k$)  become diagonal matrices with diagonal entries equal to $m_i/2$ (Fact \ref{fact3} from Section \ref{section:helpful-facts}). Then,  the matrix in  the right hand side of  \eqref{eqn:L84} will be
\begin{align}\left[\left(\bigoplus_{i =1}^{k}\frac{m_i}{2\sigma_i^2}I_{2N\times 2N}\right)+\Gamma\otimes I\right]^{-1}
&=\left(\diag ([\frac{m_1}{2\sigma_1^2}, \frac{m_2}{2\sigma_2^2}, \cdots, \frac{m_k}{2\sigma_k^2}])\otimes I+\Gamma\otimes I\right)^{-1}
\\&=\left(\diag ([\frac{m_1}{2\sigma_1^2}, \frac{m_2}{2\sigma_2^2}, \cdots, \frac{m_k}{2\sigma_k^2}])+\Gamma\right)^{-1}\otimes I_{2N\times 2N}.
\end{align}
Therefore, from \eqref{eqn:L84} and the fact that $\tr (A \otimes B)= \tr(A) \tr(B)$, we get
$$\mathsf{D}_{\mathsf b\min}\leq N \cdot\tr\left( \diag ([\frac{m_1}{2\sigma_1^2}, \frac{m_2}{2\sigma_2^2}, \cdots, \frac{m_k}{2\sigma_k^2}])+\Gamma\right)^{-1}.$$

Therefore, from \eqref{} and the above inequality, we conclude that 
$$\mathsf{D}_{\mathsf b\min} =  N \cdot\tr\left( \diag ([\frac{m_1}{2\sigma_1^2}, \frac{m_2}{2\sigma_2^2}, \cdots, \frac{m_k}{2\sigma_k^2}])+\Gamma\right)^{-1}.$$

\noindent
\textbf{Part (ii):} \label{simplified} 
So far we have shown that  when $m_i > 2N_2$ for $i =1, 2, \cdots, k$, the minimal distortion is 
$$\mathsf{D}_{\mathsf b\min}=N \cdot\tr\left( \diag ([\frac{m_1}{2\sigma_1^2}, \frac{m_2}{2\sigma_2^2}, \cdots, \frac{m_k}{2\sigma_k^2}])+\Gamma\right)^{-1}.$$

 Here, we  wish to simplify the above equation to obtain 
Observe that 
 $\Gamma=\Lambda^{-1}$ can be computed from \eqref{Rho} as follows:
\begin{align}
\Gamma=\begin{pmatrix}{\aaa} & {\bbb } &{\bbb} & \cdots & {\bbb}
\\{\bbb} & {\aaa}&{\bbb} & \cdots&{\bbb}
\\\vdots&\vdots&\vdots&\vdots&\vdots
\\{\bbb} & {\bbb} &{\bbb} & \cdots & {\aaa}  \end{pmatrix}_{k\times k},\label{eqn:Rho-1}
\end{align}
in which
\begin{align}
 \aaa = \frac{\eta+k-1}{\eta(\eta+k)}~~ \text{ and } ~~\bbb = \frac{-1}{\eta(\eta+k)}.
\end{align}
For simplicity define:
\begin{align}
A := \diag([\frac{m_1}{2\sigma_1^2}, \frac{m_2}{2\sigma_2^2}, ..., \frac{m_k}{2\sigma_k^2}])+ \Gamma
\end{align}
Therefore, we need to find $\tr(A^{-1})$ which is equal to $\sum_{i=1}^{k} (A^{-1})_{i,i}$. To calculate the diagonal entries of $A^{-1}$, we use the Cramer's rule as follows:
\begin{align}
(A^{-1})_{i,i} = \frac{\det(A_i)}{\det(A)} 
\end{align}  
where $A_i$ is the remaining matrix after removing the $i$th row and column of $A$. According to Lemma \ref{det_lemma} (given in Appendix \ref{app}), we deduce:
\begin{align} \label{tr_inverse_i}
(A^{-1})_{i,i} = \frac{\prod\limits_{\substack{j=1 \\ j \neq i}}^{k}\phi_j+\bbb\sum\limits_{\substack{j=1 \\ j \neq i}}^{k}\prod\limits_{\substack{l=1 \\ l \neq i, j}}^{k} \phi_l} 
{\prod\limits_{j=1}^{k}\phi_j+\bbb\sum\limits_{j=1}^{k}\prod\limits_{\substack{l=1 \\ l \neq j}}^{k} \phi_l}
\end{align} 
in which $
\phi_i=\frac{m_i}{2\sigma_i^2}+\aaa-\bbb=\frac{m_i}{2\sigma_i^2}+\frac{1}{\eta}
$. Hence, 
\begin{align}\tr(A^{-1}) &= \sum_{i=1}^{k} (A^{-1})_{i,i}\label{Aii}
\\&= \frac{\sum\limits_{i=1}^{k}\frac{1}{\phi_i} +\bbb \sum\limits_{\substack{i,j=1 \\ j \neq i}}^{k} \frac{1}{\phi_i\phi_j}}{1+\bbb(\sum\limits_{i=1}^{k}\frac{1}{\phi_i})},\label{sumAii}
\end{align}
in which \eqref{sumAii} is derived by  replacing \eqref{tr_inverse_i} in \eqref{Aii} and dividing both numerator and denominator by $\prod\limits_{i=1}^{k}\phi_i$. 
Observe that by our definition $\Phi_{-1}: = \sum\limits_{i=1}^{k}\frac{1}{\phi_i}$, we get
\begin{align}
\tr(A^{-1})= \frac{\Phi_{-1}+\bbb(\Phi_{-1}^2-\sum\limits_{i=1}^{k}\frac{1}{\phi_i^2})}{1+\bbb \Phi_{-1}} = \Phi_{-1} - \frac{\bbb}{1+\bbb \Phi_{-1}}\Phi_{-2}.
\end{align}
We get the desired result by replacing $\bbb = -1/(\eta(\eta+k))$. \hfill\ensuremath{\square}

\appendix 

\section{Lemmas}\label{sec:prelim}
In this section, we state some  lemmas  that have been  used in the proof section.

\subsection{Majorization inequalities}
A vector $\x\in \mathbb{R}^n$ is majorized by $\y\in \mathbb{R}^n$ if after sorting the two vectors in decreasing order, the following inequalities hold:
\begin{align}
\sum_{i=1}^{k} x_i \leq \sum_{i=1}^{k} y_i  \quad (1 \leq k \leq n), \quad
\sum_{i=1}^{n} x_i = \sum_{i=1}^{n} y_i.  
\end{align}
A fundamental result in majorization theory states that for any Hermitian matrix $A$ of size $n \times n$, the diagonal entries of $A$ are majorized by its eigenvalues \cite{Bhatia}.
The extension of the above result to the block Hermitian matrices is also true (e.g. see \cite[Sec. 1]{Lin}):

\begin{lemma}[Block majorization inequality]\label{LemmaBlockMaj}
\emph{If a Hermitian matrix $A$ is partitioned into block matrices 
\begin{align}A=\begin{pmatrix}M_{11}&M_{12}&\dots &M_{1k}\\M_{21}&M_{22}&\dots&M_{2k}\\ \vdots &&\vdots\\ M_{k1}&M_{k2}&\dots &M_{kk}\end{pmatrix}\label{blockAM}\end{align}
for matrices $M_{ij},  {i,j=1,2,\dots, k}$, then the eigenvalues of $\bigoplus_{i=1}^kM_{ii}$ are majorized by the eigenvalues of $A$.}
\end{lemma}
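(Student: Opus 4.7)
The plan is to verify the two majorization conditions using Ky Fan's maximum principle. Let $D := \bigoplus_{i=1}^k M_{ii}$ and let $n$ be the size of $A$. Recall that for any Hermitian $H$ of size $n\times n$,
\begin{equation*}
\sum_{j=1}^r \lambda_j(H) = \max_{V^\fordagger V = I_r} \tr(V^\fordagger H V), \qquad 1 \leq r \leq n,
\end{equation*}
where the eigenvalues are ordered $\lambda_1(H) \geq \cdots \geq \lambda_n(H)$ and the maximum is taken over all $n\times r$ isometries. I would apply this to $H = A$ after exhibiting a particular isometry $V$ produced from the spectral data of $D$.

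The key observation is that because $D$ is block diagonal, its top $r$ eigenvectors can be chosen to be supported within individual coordinate blocks: each such unit vector has the form $u_j = (0,\dots,0, v_j,0,\dots,0)^\fordagger$ where $v_j$ lives in the block corresponding to $M_{i_j i_j}$ and is an eigenvector of that diagonal block. Collecting $r$ such orthonormal vectors as columns of $V$ yields $\tr(V^\fordagger D V) = \sum_{j=1}^r \lambda_j(D)$. The crucial point is that for this specific $V$ we have $u_j^\fordagger A u_j = v_j^\fordagger M_{i_j i_j} v_j = u_j^\fordagger D u_j$, since the off-diagonal blocks $M_{i_j, l}$ with $l \neq i_j$ get multiplied by zero entries of $u_j$. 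Summing over $j$ gives $\tr(V^\fordagger A V) = \tr(V^\fordagger D V)$, and Ky Fan's principle applied to $A$ yields
\begin{equation*}
\sum_{j=1}^r \lambda_j(A) \;\geq\; \tr(V^\fordagger A V) \;=\; \sum_{j=1}^r \lambda_j(D),
\end{equation*}
which is the partial-sum condition for majorization.

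The full-sum equality $\sum_{j=1}^n \lambda_j(A) = \tr(A) = \sum_{i=1}^k \tr(M_{ii}) = \tr(D) = \sum_{j=1}^n \lambda_j(D)$ finishes the proof. I do not anticipate a serious obstacle: the only mild subtlety is justifying that the top eigenvectors of $D$ can be taken to be block-supported, which is immediate from the direct-sum decomposition of $D$'s invariant subspaces along the coordinate blocks. An alternative route would be induction on $k$ combined with Cauchy interlacing, but the Ky Fan argument above is cleaner and already exposes why off-diagonal blocks $M_{ij}$ ($i\neq j$) play no role in the inequality.
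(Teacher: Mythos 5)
Your proof is correct. Note that the paper does not actually prove Lemma~\ref{LemmaBlockMaj}: it invokes it as a known extension of the scalar majorization result and cites Lin--Wolkowicz for it, so you are supplying an argument where the paper supplies only a reference. The Ky Fan route you take is the standard one and all the steps check out: a full orthonormal eigenbasis of $D=\bigoplus_i M_{ii}$ can indeed be chosen block-supported, the resulting isometry $V$ satisfies $\tr(V^\fordagger A V)=\tr(V^\fordagger D V)=\sum_{j=1}^r\lambda_j(D)$ because only the diagonal quadratic forms $u_j^\fordagger A u_j$ enter the trace and these see only the diagonal blocks, and the equal-trace condition closes the majorization. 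For completeness, the other standard proof is the pinching trick: the block-diagonal projection $A\mapsto D$ can be written as an average of unitary conjugations $U_\epsilon A U_\epsilon^\fordagger$ with $U_\epsilon$ block-diagonal $\pm I$ signs, and an average of unitary conjugates of $A$ always has spectrum majorized by that of $A$; this avoids choosing eigenvectors but is no shorter. Either argument would be a fine self-contained replacement for the paper's citation.
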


If a vector $\x$ is majorized by $\y$, then for any convex functions $f:\mathbb R\mapsto\mathbb R$, we have  $\sum_{i}f(x_i) \leq \sum_if(y_i)$~\cite{Bhatia}. This implies that
\begin{lemma}\label{PeirelsLemma}
\emph{Let $\Omega$ be a closed interval in $\mathbb{R}$. 
For any Hermitian matrix $A$ with eigenvalues in $\Omega$, and any convex function $f$ on $\Omega$,
\begin{align}
\tr(f(A)) \geq \tr(f(A_{\diag})).\label{Peirels}
\end{align}
 More generally by Lemma \ref{LemmaBlockMaj} , for a  Hermitian matrix $A$ partitioned into  block matrices $M_{ij},  {i,j=1,2,\dots, k}$, as in \eqref{blockAM},
\begin{align}
\tr(f(A)) \geq \tr\left(f\Big(\bigoplus_{i=1}^kM_{ii}\Big)\right).\label{BlockPeirels}
\end{align} }
\end{lemma}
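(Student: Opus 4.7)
The plan is to derive both inequalities from a single majorization principle, using the tools the excerpt has just recalled. The core fact is Hardy--Littlewood--P\'olya: if a real vector $\x$ is majorized by $\y$ and $f$ is convex on an interval containing all the entries, then $\sum_i f(x_i) \leq \sum_i f(y_i)$. I will apply this in two layers, entry-wise for \eqref{Peirels} and block-wise for \eqref{BlockPeirels}.

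For the first inequality, let $A$ be Hermitian with spectrum in $\Omega$, and let $\lambda_1,\dots,\lambda_n$ denote its eigenvalues. The classical Schur--Horn theorem (the scalar majorization inequality cited just above the lemma) says that the vector of diagonal entries $(A_{11},\dots,A_{nn})$ is majorized by $(\lambda_1,\dots,\lambda_n)$. Since every diagonal entry of a Hermitian matrix lies between its smallest and largest eigenvalue, the diagonal entries lie in $\Omega$ as well, so $f$ is defined on them. Applying the convexity consequence of majorization gives
\begin{align}
\sum_{i=1}^n f(A_{ii}) \;\leq\; \sum_{i=1}^n f(\lambda_i).
\end{align}
The right side equals $\tr(f(A))$ by the spectral definition of $f(A)$ given in the notation subsection, and the left side equals $\tr(f(A_{\diag}))$ since $A_{\diag}$ is diagonal with entries $A_{ii}$. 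This proves \eqref{Peirels}.

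For the block version, I would repeat the same argument but with Lemma~\ref{LemmaBlockMaj} in place of Schur--Horn. Let $D=\bigoplus_{i=1}^k M_{ii}$ be the block diagonal part of $A$, and let $\mu_1,\dots,\mu_n$ and $\lambda_1,\dots,\lambda_n$ denote the eigenvalues of $D$ and $A$ respectively. Lemma~\ref{LemmaBlockMaj} asserts that $(\mu_j)$ is majorized by $(\lambda_j)$. Each $\mu_j$ is an eigenvalue of some diagonal block $M_{ii}$, and every such eigenvalue of $M_{ii}$ equals $\y^{*}M_{ii}\y$ for a unit vector $\y$, which in turn equals $\x^{*}A\x$ for the corresponding embedded vector $\x$; hence the $\mu_j$ lie between the extreme eigenvalues of $A$ and are therefore in $\Omega$. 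Convexity of $f$ then yields $\sum_j f(\mu_j)\leq \sum_j f(\lambda_j)$, which is exactly
\begin{align}
\tr\!\left(f\!\left(\bigoplus_{i=1}^k M_{ii}\right)\right)\;\leq\;\tr(f(A)),
\end{align}
proving \eqref{BlockPeirels}.

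There is no real obstacle here beyond bookkeeping: the statement is essentially a direct consequence of the majorization facts already quoted immediately above the lemma (the scalar Schur--Horn inequality and Lemma~\ref{LemmaBlockMaj}), combined with the standard convex-function characterization of majorization. The only subtle point to verify carefully is that the diagonal entries (respectively, the eigenvalues of the diagonal blocks) indeed lie in $\Omega$ so that $f$ may be applied to them, which follows from the Rayleigh--Ritz bound $\lambda_{\min}(A)\leq \x^* A\x \leq \lambda_{\max}(A)$ for unit vectors $\x$.
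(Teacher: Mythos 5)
Your proof is correct and follows exactly the route the paper intends: the scalar case is Schur--Horn majorization of the diagonal by the eigenvalues plus the Hardy--Littlewood--P\'olya convexity consequence, and the block case is the same argument with Lemma~\ref{LemmaBlockMaj} in place of Schur--Horn. The paper states the lemma as an immediate consequence of these facts without writing out the details, so your version is, if anything, slightly more careful (e.g., checking that the relevant eigenvalues and diagonal entries lie in $\Omega$).
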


\subsection{Other useful definitions and inequalities}

\begin{lemma}\emph{\cite{Sagae}\label{Generalizedmean} 
Let $w_1, w_2, \cdots, w_k$ be non-negative weights adding up to one, and let $B_1, B_2, \cdots, B_k$ be $n\times n$ positive definite matrices. Consider the weighted arithmetic and harmonic means of the matrices $B_i$
\begin{align}
A &\triangleq w_1 B_1+w_2 B_2+ \cdots+ w_k B_k,\\
H &\triangleq (w_1 B_1^{-1}+w_2 B_2^{-1}+ \cdots+ w_k B_k^{-1})^{-1}.
\end{align}
Then, the following inequality holds,
$$H \leq  A,$$  
with  equality if and only  if  $B_1=B_2=\cdots=B_k$.}
\end{lemma}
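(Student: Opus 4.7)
The plan is to derive the matrix arithmetic--harmonic mean inequality by combining two well-known facts about the positive definite cone: (i) the map $f(X)=X^{-1}$ is operator convex on positive definite matrices, and (ii) the same map is operator antitone, that is $0 < X \leq Y$ implies $Y^{-1} \leq X^{-1}$. Both are consequences of the L\"owner--Heinz theorem, which is already listed among the paper's proof tools.

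First I would apply the operator Jensen inequality to the convex combination $A = \sum_{i=1}^k w_i B_i$ using operator convexity of $X \mapsto X^{-1}$, obtaining
\begin{align}
A^{-1} \;=\; \Big(\sum_{i=1}^k w_i B_i\Big)^{-1} \;\leq\; \sum_{i=1}^k w_i B_i^{-1} \;=\; H^{-1}.
\end{align}
Next I would invoke operator antitonicity of inversion to flip the inequality:
\begin{align}
H \;=\; \Big(\sum_{i=1}^k w_i B_i^{-1}\Big)^{-1} \;\leq\; \sum_{i=1}^k w_i B_i \;=\; A,
\end{align}
which is the desired bound.

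For the equality characterization, the ``if'' direction is trivial: if all $B_i$ coincide then $A=B_1=H$. For the ``only if'' direction I would use a Schur complement argument, since the bare Jensen inequality is not a priori strict. For each $i$, the block matrix
\begin{align}
M_i \;=\; \begin{pmatrix} B_i & I \\ I & B_i^{-1} \end{pmatrix}
\end{align}
is positive semidefinite (its Schur complement $B_i^{-1}-I\cdot B_i^{-1}\cdot I$ vanishes), and has rank $n$ with kernel $\{(x,-B_i x):x\in\mathbb R^n\}$. The weighted sum $\sum_i w_i M_i = \begin{pmatrix} A & I \\ I & H^{-1} \end{pmatrix}$ is PSD and, by Schur complement, satisfies $H^{-1} \geq A^{-1}$, which recovers $A \geq H$. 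If equality $A=H$ holds, then the sum has the same rank $n$ as each summand, which forces all the kernels of the $M_i$ to coincide; comparing the parametrization $(x,-B_i x)$ across $i$ yields $B_1=\cdots=B_k$.

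The main obstacle I anticipate is precisely this equality condition: operator Jensen inequalities are not automatically strict, so the Schur complement / common kernel route is needed to pin down the equality case cleanly. The forward bound itself is a one-line application of the L\"owner--Heinz framework already adopted in the paper.
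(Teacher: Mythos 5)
The paper does not prove this lemma at all: it is quoted verbatim from the cited reference of Sagae and Tanabe and used as a black box (with $w_i=1/k$) in the proof of \eqref{Lower2S}. Your argument is therefore necessarily "different," and it is correct. The inequality follows exactly as you say: operator convexity of $t\mapsto t^{-1}$ on $(0,\infty)$ gives $A^{-1}\leq \sum_i w_i B_i^{-1}=H^{-1}$, and operator antitonicity of inversion flips this to $H\leq A$. Your Schur-complement route is also sound and is the right tool for the equality case: each $M_i$ is PSD of rank $n$ with kernel $\{(x,-B_ix)\}$, the kernel of a nonnegative combination of PSD matrices is the intersection of the kernels of the summands with positive weight, and if $A=H$ the sum $\begin{pmatrix} A & I\\ I & A^{-1}\end{pmatrix}$ has rank $n$, forcing those kernels to coincide and hence the corresponding $B_i$ to be equal. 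The one caveat is that this argument (correctly) only pins down $B_i$ for indices with $w_i>0$; as stated with merely non-negative weights, the "only if" direction of the lemma is actually false (take $w_1=1$, $w_2=0$, $B_1\neq B_2$), so either the weights should be assumed strictly positive or the equality condition restricted to the active indices. This imprecision is in the transcribed statement, not in your proof, and is harmless for the paper's application where all $w_i=1/k$.
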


\begin{definition}\emph{\cite{Zhan}
A real-valued continuous function $f(t)$ on a real interval $I$ is called operator monotone if
\begin{align}
A \leq  B	~~ \Rightarrow	~~f(A) \leq  f(B),
\end{align}
for Hermitian matrices $A$ and $B$ with eigenvalues in $I$ . 
Furthermore, $f$ is called operator convex if 
$$f(\lambda A + (1 - \lambda)B) \leq  \lambda f(A) + (1 - \lambda)f(B),$$
for any  $0 \leq \lambda \leq 1$ and Hermitian matrices $A$ and $B$ with  eigenvalues that are contained in $I$, and  $f$ is said to be operator concave if $-f$ is operator convex.}
\end{definition}

\begin{lemma} \emph{ \cite[p.260]{Horn} \label{Hornlemma}
Let $m$ and  $n$ be given positive integers and matrices $A$ and $B$ be any square matrices of sizes $m \times m$ and  $n \times n$, respectively. Then, matrix $B \otimes A$ is permutation similar to matrix $A\otimes B$, i.e.,  there is a unique matrix $P$ such that
\begin{align}
B \otimes A = P(m,n)^T (A\otimes B )P(m,n)
\end{align}
 where $P(m,n)$ is the following $mn\times mn$ permutation matrix:
\begin{align} 
 P(m,n) = \sum_{i=1}^{n}\sum_{j=1}^{m} E_{ij}\otimes E_{ij}^T                               \end{align}
in which  $E_{ij}$ is an $m \times n$ matrix such that only the $(i,j)$th entry  is  unity and the other entries are zero.
 Furthermore, the useful following property holds
\begin{align}
P(m,n) =P(n,m)^T = P(n,m)^{-1}.\label{propertypermut}
\end{align}}
\end{lemma}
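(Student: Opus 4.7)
The plan is to recognize $P(m,n)$ as the classical commutation (or ``perfect shuffle'') matrix and verify the lemma through its action on vectorized matrices. Throughout I will write $\mathrm{vec}(X)$ for the vector obtained by stacking the columns of $X$.

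The first step I would carry out is the characterization
\[
P(m,n)\,\mathrm{vec}(X) \;=\; \mathrm{vec}(X^{\fordagger})
\qquad \text{for every } X\in\mathbb{R}^{m\times n}.
\]
This follows directly from the explicit sum $P(m,n)=\sum_{i,j}E_{ij}\otimes E_{ij}^{\fordagger}$: testing the identity on the standard basis matrix $X=E_{pq}$, only the single summand with $(i,j)=(p,q)$ contributes, and one reads off $\mathrm{vec}(E_{pq}^{\fordagger})$. By linearity the characterization extends to all $X$. From this characterization, $P(m,n)$ is a $0/1$ matrix with exactly one $1$ in every row and column, hence a permutation matrix, so $P(m,n)^{-1}=P(m,n)^{\fordagger}$.

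Next I would establish the similarity $B\otimes A = P(m,n)^{\fordagger}(A\otimes B)P(m,n)$ by checking equality on $\mathrm{vec}(X)$ for an arbitrary $X\in\mathbb{R}^{m\times n}$, using the standard Kronecker identity $(C\otimes D)\,\mathrm{vec}(Y)=\mathrm{vec}(DYC^{\fordagger})$. The left side gives $(B\otimes A)\mathrm{vec}(X)=\mathrm{vec}(AXB^{\fordagger})$. For the right side, $P(m,n)$ first turns $\mathrm{vec}(X)$ into $\mathrm{vec}(X^{\fordagger})$ (an $n\times m$ matrix); then $(A\otimes B)$ produces $\mathrm{vec}(BX^{\fordagger}A^{\fordagger})$; finally $P(m,n)^{\fordagger}$ transposes the matrix inside $\mathrm{vec}(\cdot)$ once more, yielding $\mathrm{vec}(AXB^{\fordagger})$. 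Agreement on every $X$ gives the matrix identity. For the auxiliary claim $P(m,n)=P(n,m)^{\fordagger}$, I would transpose the defining sum for $P(n,m)$ and reindex term-by-term to see that the result matches $P(m,n)$; equivalently, by the first step $P(n,m)^{\fordagger}$ acts on a vectorized $m\times n$ matrix by transposition, which uniquely identifies it with $P(m,n)$. Combined with the orthogonality noted above, this gives $P(n,m)^{-1}=P(n,m)^{\fordagger}=P(m,n)$.

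There is no serious obstacle here since the result is a standard linear-algebra fact; the only mildly delicate point is bookkeeping the indices across the two distinct definitions $P(m,n)$ and $P(n,m)$, which is handled cleanly by passing through the $\mathrm{vec}$ characterization once and invoking linearity, rather than by manipulating the double sums directly.
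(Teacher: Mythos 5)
The paper gives no proof of this lemma at all --- it is quoted as a known result from Horn and Johnson --- so there is nothing on the paper's side to compare against; your write-up supplies the missing verification, and it is correct. It is essentially the standard derivation of the commutation (perfect-shuffle) matrix: the characterization $P(m,n)\,\mathrm{vec}(X)=\mathrm{vec}(X^{\mathsf{T}})$ checked on the basis matrices $E_{pq}$ and extended by linearity, the observation that $P(m,n)$ therefore permutes the standard basis of $\mathbb{R}^{mn}$ and so satisfies $P^{-1}=P^{\mathsf{T}}$, the similarity $B\otimes A=P(m,n)^{\mathsf{T}}(A\otimes B)P(m,n)$ via the identity $(C\otimes D)\mathrm{vec}(Y)=\mathrm{vec}(DYC^{\mathsf{T}})$, and the identification $P(n,m)^{\mathsf{T}}=P(m,n)$ through its action on vectorized matrices --- all of these steps are sound, and you correctly avoid circularity by noting that $P(m,n)^{\mathsf{T}}=P(m,n)^{-1}$ implements transposition on $n\times m$ matrices independently of the auxiliary claim. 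Two minor remarks. First, the summation limits in the statement as printed ($i$ up to $n$, $j$ up to $m$, while $E_{ij}$ is $m\times n$) are inconsistent with the dimensions of $E_{ij}$; your implicit reading, summing over all valid pairs $i\in\{1,\dots,m\}$, $j\in\{1,\dots,n\}$, is the correct one. Second, you do not address the word ``unique'' in the statement: uniqueness of $P$ (as a single matrix working for \emph{all} $A$ and $B$) follows because if $P_1$ and $P_2$ both satisfied the identity for every $A,B$, then $P_2P_1^{\mathsf{T}}$ would commute with every $A\otimes B$ and hence with all of the $mn\times mn$ matrices they span, forcing $P_2P_1^{\mathsf{T}}$ to be a scalar multiple of the identity and therefore equal to $I$. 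Neither point affects the validity of what you proved.
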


\begin{lemma}\emph{\label{jointlyconvex} 
 Assume that $B_1, B_2$ are positive definite matrices. Let $B=(B_1+B_2)/2$, then $$2\tr(B^{-1})\leq \tr( B_1^{-1})+\tr(B_2^{-1}).$$ }
\end{lemma}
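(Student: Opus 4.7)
The plan is to reduce the scalar trace inequality to the operator-level inequality $B^{-1}\leq \tfrac12(B_1^{-1}+B_2^{-1})$ and then take traces. Recall that $f(t)=1/t$ is operator convex on $(0,\infty)$ by the L\"{o}wner--Heinz theorem, which was already invoked as one of our main tools. Applied to the positive definite matrices $B_1,B_2$ with weight $\lambda=\tfrac12$, operator convexity gives directly
\begin{equation*}
\left(\frac{B_1+B_2}{2}\right)^{-1}\;\leq\;\frac{B_1^{-1}+B_2^{-1}}{2},
\end{equation*}
i.e.\ $B^{-1}\leq \tfrac12(B_1^{-1}+B_2^{-1})$ in the L\"{o}wner order.

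Since the trace is order-preserving on Hermitian matrices (if $A\leq C$ then $C-A$ is positive semi-definite, so $\tr(C)-\tr(A)\geq 0$), taking traces on both sides yields
\begin{equation*}
\tr(B^{-1})\;\leq\;\frac{1}{2}\bigl(\tr(B_1^{-1})+\tr(B_2^{-1})\bigr),
\end{equation*}
and multiplying by $2$ gives the claimed inequality.

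As an alternative route that uses only machinery already stated in the paper, I would invoke Lemma \ref{Generalizedmean} (matrix AM--HM) with weights $w_1=w_2=\tfrac12$: this immediately gives $2(B_1^{-1}+B_2^{-1})^{-1}\leq (B_1+B_2)/2=B$. Inverting both sides (again permissible because $X\mapsto X^{-1}$ is operator monotone decreasing on positive definites) gives $B^{-1}\leq \tfrac12(B_1^{-1}+B_2^{-1})$, and taking traces concludes the proof. There is no substantive obstacle here — the statement is essentially the scalar reflection of operator convexity of the inverse, and both of the above derivations are short applications of results already collected in the appendix.
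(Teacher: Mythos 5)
Your main argument — operator convexity of $t\mapsto t^{-1}$ on $(0,\infty)$ via the L\"{o}wner--Heinz theorem, followed by taking traces — is exactly the proof given in the paper, just written out in more detail. The proposal is correct, and your alternative route through the matrix AM--HM inequality is also valid but not needed.
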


\begin{proof}
The  L\"{o}wner-Heinz theorem  implies that the function $f(t)=t^{-1}$ for $t>0$ is operator convex \cite{Eric}. From the fact that   $B=(B_1+B_2)/2$, we conclude the desired inequality.
\end{proof}

\begin{lemma} \label{det_lemma}\emph{
Given real non-negative $a_1,a_2,...,a_k$ and positive $b$, let
 \begin{align}
M(a_1,..,a_k,b) := \begin{pmatrix}{a_1} & {-b } &{-b } & \cdots & {-b }
\\{-b} & {a_2}&{-b} & \cdots&{-b}
\\\vdots&\vdots&\vdots&\vdots&\vdots
\\{-b} & {-b} &{-b} & \cdots & {a_k}  \end{pmatrix}_{k\times k}.
\end{align}
Then,
\begin{align}
\det(M(a_1,...,a_k,b)) = \prod_{i=1}^{k} (a_i+b) - b \sum_{i=1}^{k} \prod_{\substack{j=1 \\ j \neq i}}^{k} (a_j+b).
\end{align}}
\end{lemma}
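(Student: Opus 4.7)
The cleanest route is to recognize $M$ as a rank-one perturbation of a diagonal matrix. Write
\[
M = D' - b\,\mathbf{1}\mathbf{1}^\fordagger, \qquad D' = \mathrm{diag}(a_1+b,\, a_2+b,\, \ldots,\, a_k+b),
\]
where $\mathbf{1}$ is the all-ones column vector in $\mathbb{R}^k$. Indeed, the off-diagonal entries $-b$ of $M$ are exactly those of $-b\,\mathbf{1}\mathbf{1}^\fordagger$, and the diagonal bump $+b$ that turns $a_i$ into $a_i+b$ is cancelled on the diagonal by the $-b$ contributed by $-b\,\mathbf{1}\mathbf{1}^\fordagger$. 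This algebraic observation is essentially the whole content of the lemma; everything else is a one-line invocation of a standard identity.

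Next, apply the matrix determinant lemma (the rank-one Sylvester / Sherman--Morrison determinant identity): for any invertible diagonal $D'$ and column vectors $u,v$,
\[
\det\bigl(D' + u v^\fordagger\bigr) \;=\; \det(D')\,\bigl(1 + v^\fordagger D'^{-1} u\bigr).
\]
Taking $u = -b\,\mathbf{1}$ and $v = \mathbf{1}$ gives
\[
\det(M) \;=\; \prod_{i=1}^{k}(a_i+b)\,\Biggl(1 - b\sum_{i=1}^{k}\frac{1}{a_i+b}\Biggr).
\]
Distributing the product $\prod_i(a_i+b)$ through the parenthesis turns the $i$-th term of the sum into $\prod_{j\neq i}(a_j+b)$, and one reads off precisely the right-hand side claimed in the statement.

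There is essentially no obstacle. The only formal hypothesis needed for the matrix determinant lemma is $\det(D') \neq 0$, i.e.\ $a_i + b \neq 0$ for each $i$; this is automatic here because $a_i \geq 0$ and $b > 0$. If one wished to drop that assumption, both sides of the identity are polynomials in $(a_1,\ldots,a_k,b)$, so the equality extends by polynomial continuation from the open set where $D'$ is invertible. As alternative proofs one could proceed by induction on $k$ (subtracting row $1$ from rows $2,\ldots,k$ to create zeros and then expanding), or by diagonalizing $b\,\mathbf{1}\mathbf{1}^\fordagger$ via an orthogonal transform, but both require more bookkeeping than the rank-one update, which is the plan I would follow.
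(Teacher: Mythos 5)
Your proof is correct, and the decomposition $M = D' - b\,\mathbf{1}\mathbf{1}^\fordagger$ with $D' = \mathrm{diag}(a_1+b,\ldots,a_k+b)$ is verified exactly as you describe; the hypothesis $a_i\ge 0$, $b>0$ indeed guarantees $\det(D')\neq 0$, so the matrix determinant lemma applies without caveat. However, your route differs from the paper's: the authors prove the identity by elementary row operations, subtracting the first row from the others to create zeros and then clearing the first row with multiples $b/(a_i+b)$ of the remaining rows, arriving at a triangular matrix whose diagonal product gives the result. Your rank-one-update argument is shorter and makes the structural reason for the formula transparent (it is the same $\mathrm{diag}+\text{rank-one}$ structure that underlies the explicit inverse $\Gamma=\Lambda^{-1}$ in \eqref{eqn:Rho-1} and the Sherman--Morrison-type identity \eqref{alternativeformmatrix} used elsewhere in the paper), at the cost of invoking the matrix determinant lemma as an external fact; the paper's row-reduction proof is entirely self-contained and uses nothing beyond the invariance of the determinant under row operations. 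Your remark on extending to degenerate $a_i+b$ by polynomial continuation is a nice bonus, though unnecessary under the stated hypotheses.
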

\begin{proof} The elementary row operations do not change the determinant. If we first 
subtract the first row from all the other rows, and then multiply  the $i$th row of the matrix  by ${b}/{(a_i+b)}$  for $i=2,3,..,k$,  and add it to the first row, we end up with an upper triangular matrix with diagonal entries $\{a', a_2+b, a_3+b, ..., a_k+b\}$, where 
$
a' = a_1 - \sum_{i=2}^{k} \frac{(a_1+b)b}{(a_i+b)}.
$ 
Since the determinant of an upper triangular matrix is equal to product of the diagonal elements, we have
\begin{equation}
\begin{split}
\det(M(a_1,...,a_k,b)) &= a'(a_2+b)...(a_k+b)\\
 &=a_1(a_2+b)...(a_k+b) - b \sum_{i=2}^{k} \prod_{\substack{j=1 \\ j \neq i}}^{k} (a_j+b)  \\ 
 & =\prod_{i=1}^{k} (a_i+b) - b \sum_{i=1}^{k} \prod_{\substack{j=1 \\ j \neq i}}^{k} (a_j+b).
 \end{split}
 \end{equation}
 \end{proof}

\section{ A new reverse majorization inequality} \label{app}
\begin{theorem} \label{Lemmaab}
\emph{Take two positive semidefinite matrices $F$ and $G$ of sizes $m \times m$ satisfying $G=F\circ L$, where $\circ$ is the Hadamard product and $L$ is a matrix of the following form:
\begin{align} 
L=\begin{pmatrix}{a \mathbf{1}_{m_1\times m_1}}& {b\mathbf{1}_{m_1\times m_2} } &\cdots & {b\mathbf{1}_{m_1\times m_k} }
\\{b\mathbf{1}_{m_2\times m_1} } & {a\mathbf{1}_{m_2\times m_2} }& \cdots&{b\mathbf{1}_{m_2\times m_k}}
\\\vdots&\vdots&\vdots&\vdots
\\{b \mathbf{1}_{m_k\times m_1}} & {b \mathbf{1}_{m_k\times m_2}}  & \cdots & {a\mathbf{1}_{m_k\times m_k}} 
 \end{pmatrix},\label{RhoNN2d}
\end{align}
where $\mathbf{1}$ is a matrix with all one coordinates and $a$ and $b$ are two positive real numbers, where $0 \leq a \leq b$.
Then, for any positive definite diagonal matrix $C$, we have
\begin{align} \tr \left[ \big( F+ C\big)^{-1}G\right ]& \leq \tr \left[ \big( F_{\diag}+ C\big)^{-1}G_{\diag}\right ],
\label{diagonalab}
\end{align}
where $F_{\diag}$ is a diagonal matrix formed by taking the diagonal entries of $F$, and the matrix $G_{\diag}$ is defined similarly.}
\end{theorem}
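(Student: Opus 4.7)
The plan is to reduce the claimed inequality to (block) majorization inequalities of Peierls type, already collected in Lemma~\ref{PeirelsLemma} and Lemma~\ref{LemmaBlockMaj}. I would first normalise $C$: since $C$ is positive diagonal, conjugation by $C^{-1/2}$ commutes with Hadamard multiplication by $L$ (which acts entrywise) and with the ``$\diag$'' operation. Setting $\tilde F := C^{-1/2} F C^{-1/2}$ and $\tilde G := C^{-1/2} G C^{-1/2}$, one verifies that $\tilde G = \tilde F \circ L$, $\tilde F_{\diag} = C^{-1/2} F_{\diag} C^{-1/2}$, and, using the cyclic property of trace, $\tr[(F+C)^{-1} G] = \tr[(I+\tilde F)^{-1} \tilde G]$, with the corresponding identity on the right-hand side. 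Thus one may assume $C = I$ throughout.

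Next, I would exploit the structure of $L$ by writing $L = aD + b(J - D)$, where $D := \sum_i \mathbf{1}_i \mathbf{1}_i^{\fordagger}$ is the block-diagonal all-ones pattern and $J := \mathbf{1}\mathbf{1}^{\fordagger}$. Denoting $F_{\mathrm{bd}} := F \circ D$ (the block-diagonal of $F$) and $F_{\mathrm{off}} := F - F_{\mathrm{bd}}$, this yields $G = a F_{\mathrm{bd}} + b F_{\mathrm{off}}$, $G_{\mathrm{bd}} = a F_{\mathrm{bd}}$, and $G_{\diag} = a F_{\diag}$. I would then insert the intermediate quantity $\tr[(I + F_{\mathrm{bd}})^{-1} G_{\mathrm{bd}}]$ and establish the chain
\begin{equation*}
\tr[(I+F)^{-1} G] \;\leq\; \tr[(I + F_{\mathrm{bd}})^{-1} G_{\mathrm{bd}}] \;\leq\; \tr[(I + F_{\diag})^{-1} G_{\diag}].
\end{equation*}
The second inequality (``diagonalise within each block'') is routine: dividing both sides by $a$ and applying the identity $\tr[(I+X)^{-1} X] = m - \tr[(I+X)^{-1}]$ reduces it to $\tr[(I+F_{\diag})^{-1}] \leq \tr[(I+F_{\mathrm{bd}})^{-1}]$, which follows from Peierls' inequality (Lemma~\ref{PeirelsLemma}) applied within each diagonal block $F_{ii}$ with the convex function $f(t) = 1/(1+t)$.

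The heart of the argument is the first inequality, which after substitution becomes $a \cdot p \leq b \cdot q$, where
\begin{equation*}
p := \tr[(I+F)^{-1} F_{\mathrm{bd}}] - \tr[(I+F_{\mathrm{bd}})^{-1} F_{\mathrm{bd}}], \qquad q := -\tr[(I+F)^{-1} F_{\mathrm{off}}].
\end{equation*}
A Schur-complement computation gives $\bigl((I+F)^{-1}\bigr)_{\mathrm{bd}} \succeq (I + F_{\mathrm{bd}})^{-1}$ (the $i$th diagonal block of $(I+F)^{-1}$ is the inverse of a principal Schur complement of $I+F$, which is dominated by $I+F_{ii}$), and pairing with $F_{\mathrm{bd}} \succeq 0$ yields $p \geq 0$. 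Expanding $\tr[(I+F)^{-1} F]$ both as $\tr[(I+F)^{-1} F_{\mathrm{bd}}] + \tr[(I+F)^{-1} F_{\mathrm{off}}]$ and as $m - \tr[(I+F)^{-1}]$ produces the key identity
\begin{equation*}
q - p \;=\; \tr[(I+F)^{-1}] - \tr[(I+F_{\mathrm{bd}})^{-1}],
\end{equation*}
which is nonnegative by the block majorization inequality (Lemma~\ref{LemmaBlockMaj}) applied to $f(t) = 1/(1+t)$. Hence $q \geq p \geq 0$, and the hypothesis $0 \leq a \leq b$ closes the argument via $a \cdot p \leq a \cdot q \leq b \cdot q$. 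The main obstacle I anticipate is identifying the right decomposition and the quantities $p$ and $q$; once the identity $q - p = \tr[(I+F)^{-1}] - \tr[(I+F_{\mathrm{bd}})^{-1}]$ is in hand the proof collapses, and this is precisely where the assumption $a \leq b$ and the two forms of Peierls (within-block and block-level) enter in a coordinated way.
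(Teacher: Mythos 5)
Your proof is correct, and it shares the paper's skeleton—after your reduction to $C=I$ (the paper instead normalises $a=1$, $b\geq 1$), your splitting $F=F_{\mathrm{bd}}+F_{\mathrm{off}}$, $G=aF_{\mathrm{bd}}+bF_{\mathrm{off}}$ is exactly the paper's $F=A+B$, $G=A+bB$—but you assemble the inequality by a genuinely different route. The paper writes $\tr[(F+C)^{-1}G]=\tr[(F+C)^{-1}(A+B)]+(b-1)\tr[(F+C)^{-1}B]$, bounds the first term by the right-hand side of \eqref{diagonalab} with a single scalar Peierls application (via $\tr[(F+C)^{-1}(A+B)]=\tr(I)-\tr[(F+C)^{-1}C]$ and the fact that $B$ has zero diagonal), and shows the cross term is non-positive, i.e.\ $q:=-\tr[(F+C)^{-1}B]\geq 0$, by block majorization. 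You instead insert the intermediate quantity $\tr[(I+F_{\mathrm{bd}})^{-1}G_{\mathrm{bd}}]$ and reduce the first link of your chain to $ap\leq bq$; this obliges you to prove the additional fact $p\geq 0$, for which you invoke the Schur-complement monotonicity $\big((I+F)^{-1}\big)_{\mathrm{bd}}\succeq (I+F_{\mathrm{bd}})^{-1}$—an ingredient nowhere needed in the paper. Your identity $q-p=\tr[(I+F)^{-1}]-\tr[(I+F_{\mathrm{bd}})^{-1}]$ plays the role of the paper's block-majorization step, and your within-block Peierls step supplies the remainder of the paper's single scalar Peierls application; in effect you have factored the paper's one majorization step into a block-level and a within-block piece, paying for the factorisation with the Schur-complement bound. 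Both arguments are sound; the paper's is slightly leaner since it needs only $q\geq 0$ rather than $q\geq p\geq 0$, while yours makes the intermediate block-diagonal comparison explicit, which is arguably more transparent. One small point of hygiene: your second chain inequality divides by $a$, so you should note that the case $a=0$ (nominally allowed by the hypothesis $0\leq a\leq b$) is trivial because then $G_{\mathrm{bd}}=G_{\diag}=0$.
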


\begin{proof}
If the statement of theorem holds for the matrix $F$, it will also hold for the matrix $kF$ for any positive constant $k$. Therefore, without loss of generality, we assume that $a=1$ and hence, $b \geq 1$.  From the Hadamard product relation, this implies that $F$ and $G$ are equal on block matrices on the diagonal. Let $A$ denote this common part, \emph{i.e.,}
$$A=\begin{pmatrix}{F_{1:m_1\times 1:m_1}}& 0 &0& \cdots & 0
\\0& {F_{(m_1+1:m_1+m_2)\times (m_1+1:m_1+m_2)} }&0& \cdots&0
\\\vdots&\vdots&\vdots&\vdots&\vdots
\\0 & 0 &0& \cdots & {F_{(m_1+\cdots+m_{k-1}+1)\times (m_1+\cdots+m_{k-1}+m_k)}} 
 \end{pmatrix}_{m\times m}.\label{RhoNN2d22}$$

One can find matrix $B$ such that 
$F=A+B$ and $G=A+b B$. Observe that wherever $A$ is non-zero, $B$ is zero and vice versa.
Substituting $F$ and $G$ in the left hand side of \eqref{diagonalab}, one attains
\begin{align}
 \tr \left[ \big( F+ C\big)^{-1}G\right ] &= \tr \left[ \big( A+ B+ C\big)^{-1} \big (A+b B \big )\right] \nonumber
\\&=\tr \left[(A+ B+ C)^{-1} (A+ B)\right ] +\tr \left[(A+ B+ C)^{-1}  (b-1)B )\right ]. \label{twoterms}
\end{align}
We will show that the first term in the above formula is less than or equal to the right hand side of \eqref {diagonalab}  and the second term is non-positive.

Start with the first term of \eqref{twoterms}
\begin{align}
 \tr \left[(A+ B+ C)^{-1} (A+ B)\right ] 
= \tr(I_{m\times m})- \tr \left((A+ B+ C)^{-1} C\right ),\label{abcab}
\end{align}
where the second term in \eqref{abcab} can be bounded as follows:
\begin{align}
\tr \left((A+ B+ C)^{-1} C\right )
 &= \tr \left [\big( C^{\frac{-1}{2}} A C^{\frac{-1}{2}} + C^{\frac{-1}{2}} BC^{\frac{-1}{2}} + I\big)^{-1} \right ]\nonumber
\\ &\geq \tr \left[\big( C^{\frac{-1}{2}} A_{\diag} C^{\frac{-1}{2}}  +I\big)^{-1} \right].\label{piereluse1}
\end{align}
The last inequality comes from Lemma  \ref{PeirelsLemma}. \\
Hence, \eqref{abcab} will be bounded as 
\begin{align}
 \tr \left[(A+ B+ C)^{-1} (A+ B)\right ] 
&\leq \tr(I_{m\times m})- \tr \left[\big( C^{\frac{-1}{2}} A_{\diag} C^{\frac{-1}{2}}  +I\big)^{-1} \right],\label{term1}
\\&= \tr[(A_{\diag} +C)^{-1}(A_{\diag}+C)] - \tr[C^{\frac{1}{2}} (A_{\diag}+C)^{-1} C^{\frac{1}{2}}]
\\&= \tr [(A_{\diag} +C)^{-1}A_{\diag}] \label{term1_b}
\\& = \tr [(F_{\diag} +C)^{-1}G_{\diag}], \label{desired}
\end{align}
in which \eqref{term1_b} comes from the trace interchange property and \eqref{desired} is derived since $A$ is defined to be the common part of the two matrices $F$ and $G$.
 
To complete the proof, it remains to  show that the right hand side of \eqref{twoterms} is non-positive, i.e.,
\begin{align} 
 ( b- 1 ) \cdot\tr \left[ (A+ B+ C)^{-1} B \right] \leq 0.\label{secondterm1}
\end{align}
Since we have assumed that $b\geq 1$, we  need to  show that  the trace function is non-positive.  For the positive definite matrix $(A+C)$, we have
\begin{align} 
\tr \left((A+B+C)^{-1} B \right) \nonumber
& =\tr(I) - \tr \left( (A+C+ B)^{-1} (A+C) \right)\nonumber
\\& =\tr(I) - \tr \left((A+C)^{\frac{1}{2}}(A+ C+B)^{-1} (A+C)^{\frac{1}{2}} \right)\label{tracequality11}
\\& =\tr(I) - \tr \left(( I+ (A+C)^{\frac{-1}{2}}B (A+C)^{\frac{-1}{2}})^{-1} \right)\nonumber
\\& \leq \tr(I) - \tr (I) \label{AprimeB11}
\\&= 0.\nonumber
\end{align}
in which the inequality \eqref{tracequality11} follows from the trace interchange property  and  \eqref{AprimeB11} is derived using  Lemma \ref{PeirelsLemma} for the Hermitian  matrix $(I+ (A+C)^{\frac{1}{2}}B (A+C)^{\frac{1}{2}})^{-1}$. Note that the matrix $(A+C)^{\frac{1}{2}}B (A+C)^{\frac{1}{2}}$ is a block off-diagonal matrix, since matrices $A$ and $B$ has been defined to be, respectively,  block diagonal and off-diagonal matrices such that wherever $A_{ij}$ is zero, $B_{ij}$ is non-zero and vice versa. 
This completes the proof of  Theorem \ref{Lemmaab}.
\end{proof}

\end{document}